\numberwithin{equation}{section}
\newtheorem{Theorem}{Theorem}[section]
\newtheorem*{Theorem*}{Theorem}
\newtheorem{Proposition}[Theorem]{Proposition}
 { \theoremstyle{definition}

\newtheorem{Reduction}[Theorem]{Reduction}

\newtheorem{Remark}[Theorem]{Remark} }
\newcommand{\beqa}{\begin{eqnarray}}
\newcommand{\eeqa}{\end{eqnarray}}
\newcommand{\ba}{\begin{eqnarray*}}
\newcommand{\ea}{\end{eqnarray*}}
\newcommand{\bA}{\boldsymbol{A}}
\newcommand{\bB}{\boldsymbol{B}}
\newcommand{\bC}{\boldsymbol{C}}
\newcommand{\bD}{\boldsymbol{D}}
\newcommand{\bG}{\boldsymbol{G}}
\newcommand{\bL}{\boldsymbol{L}}
\newcommand{\bM}{\boldsymbol{M}}
\newcommand{\bT}{\boldsymbol{T}}
\newcommand{\bU}{\boldsymbol{U}}
\newcommand{\bV}{\boldsymbol{V}}
\newcommand{\bW}{\boldsymbol{W}}
\newcommand{\bPsi}{\boldsymbol{\Psi}}
\newcommand{\bPhi}{\boldsymbol{\Phi}}
\begin{document}

\newcommand{\arXivNumber}{2410.21878}

\renewcommand{\PaperNumber}{050}

\FirstPageHeading

\ShortArticleName{B\"acklund--Darboux Transformations for Super KdV Type Equations}

\ArticleName{B\"acklund--Darboux Transformations\\ for Super KdV Type Equations}

\Author{Lingling XUE~$^{\rm a}$, Shasha WANG~$^{\rm a}$ and Qing Ping LIU~$^{\rm b}$}

\AuthorNameForHeading{L.~Xue, S.~Wang and Q.P.~Liu}

\Address{$^{\rm a)}$~Department of Applied Mathematics, Ningbo University, Ningbo 315211, P.R.~China}
\EmailD{\href{mailto:xuelingling@nbu.edu.cn}{xuelingling@nbu.edu.cn}, \href{mailto:wang99sha@163.com}{wang99sha@163.com}}

\Address{$^{\rm b)}$ Department of Mathematics, China University of Mining and Technology,\\
\hphantom{$^{\rm b)}$}~Beijing 100083, P.R.~China}
\EmailD{\href{mailto:qpl@cumtb.edu.cn}{qpl@cumtb.edu.cn}}

\ArticleDates{Received October 30, 2024, in final form June 26, 2025; Published online July 03, 2025}

\Abstract{By introducing a Miura transformation, we derive a generalized super modified Korteweg--de Vries (gsmKdV) equation from the generalized super KdV (gsKdV) equation. It is demonstrated that, while the gsKdV equation takes Kupershmidt's super KdV (sKdV) equation and Geng--Wu's sKdV equation as two distinct reductions, there are also two equations, namely Kupershmidt's super modified KdV (smKdV) equation and Hu's smKdV equation, which are associated with the gsmKdV equation. By analyzing the flows within the gsKdV and gsmKdV hierarchies, we specifically derive the first negative flows associated with both hierarchies.
We then construct a number of B\"acklund--Darboux transformations (BDTs) for both the gsKdV and gsmKdV equations, elucidating the interrelationship between them. By proper reductions, we are able not only to recover the previously known BDTs for Kupershimdt's sKdV and smKdV equations, but also to obtain the BDTs for the Geng--Wu's sKdV/smKdV and Hu's smKdV equations. As applications, we construct some exact solutions for those equations. Since all flows of the sKdV or smKdV hierarchy share the same spatial parts of spectral problem, thus these Darboux matrices and spatial parts of BTs are applicable to any flow of those hierarchies.}

\Keywords{Darboux transformations; B\"acklund transformations; Miura transformations; super KdV equations}

\Classification{35Q53; 35A30; 58J72}

\section{Introduction}

 The celebrated Korteweg--de Vries (KdV) equation, which models solitary waves in shallow water, is one of most important integrable systems. It was related to a similar nonlinear partial differential equation, i.e., modified KdV (mKdV) equation, via the Miura transformation.
 Both KdV and mKdV equations are integrable and their integrable properties such as Lax pairs, multi-soliton solutions, infinitely many symmetries and conserved quantities, bi-Hamiltonian structures, and solvability by the inverse scattering transformation, are well known. With the development of theory of integrable systems, numerous classical systems including KdV and mKdV equations are extended to super or supersymmetric (see \cite{bc,CK,bfg,bfgk,yyg,gw,xbl, GS,ggg,gg,mg1,mg2,hp,Hr,hu,IK,skr,ppk,kuper1,kuper4,MR,mathieu,sNLS,kai,kw,lz,dfz} and references therein).

Various different super extended versions of KdV/mKdV equations exist in the literature. Kupershmidt was the first to conduct such study and in 1984 he proposed a super KdV (sKdV) equation with one bosonic field and one fermionic field, and a super mKdV (smKdV) equation together with the Miura transformation and Lax pairs \cite{kuper1}. Holod and Pakuliak in 1989 generalized Kupershmidt's sKdV equation and considered a generalized sKdV (gsKdV) equation which involves one bosonic field and two fermionic fields \cite{hp}. In~1997, Hu proposed another smKdV equation of one bosonic field and two fermionic fields, and also gave its spectral problem~\cite{hu}. In fact, Hu's smKdV equation is related to the gsKdV equation of Holod--Pakuliak via a Miura transformation, so the former is a modification of the latter~\cite{zty}.
Geng and Wu in 2010 proposed a new sKdV equation of one bosonic field and one fermionic field and constructed its spectral problem, bi-Hamiltonian structures, and infinite conservation laws \cite{gw}.
Also, replacing fermionic fields by the so-called ren-fields, Lou extended the Holod--Pakuliak's gsKdV equation to a ren-KdV type equation and gave the corresponding spectral problem~\cite{lou}.

The super systems mentioned above have been studied and their various properties have been established. The symmetries for Kupershmidt's sKdV/smKdV equations were constructed in~\cite{ker,kg}.
B\"acklund--Darboux transformations (BDTs) were established to generate associated integrable discrete systems for super nonlinear Schr\"odinger equations \cite{gg} and gsKdV equations~\cite{xl,zhou}. They were also employed to construct Grassman extensions of Yang--Baxter maps~\cite{pa,ggg} and then Grassman extended discrete integrable systems from Yang--Baxter maps~\cite{sk,skr}. Aguirre et al.\ in 2018 investigated the integrability of the supersymmetric mKdV hierarchy in the presence of defects by constructing its B\"acklund transformation \cite{ara}.
Subsequently, Adans et al. considered the gauge Miura transformation for the entire supersymmetric KdV and mKdV hierarchies \cite{aaetal}.
Recently, nonlocal symmetries and B\"acklund transformation~(BT) of Kupershmidt's smKdV equation were constructed by Zhou, Tian and Li \cite{ztl}. Very Recently, for Hu's smKdV equation, Zhou, Tian and Yang \cite{zty} obtained its bi-Hamiltonian structure and Darboux transformations. Those authors also found the Darboux transformation~(DT) for Kupershmidt's smKdV equation via reduction.

 In this paper, we clarify the relationships
 among the super KdV/mKdV systems discussed above
 and demonstrate that their B\"acklund and Darboux transformations may be constructed from a unified viewpoint. We show that the three-component gsKdV system possesses two reductions, namely Kupershmidt's sKdV equation and Geng--Wu's sKdV equation. Then, we construct a few BDTs for gsKdV equation and display that through reductions one of them leads to BDT for Kupershmidt's sKdV equation and BDT for Geng--Wu equation. Furthermore, starting from a different spectral problem for gsKdV equation, we derive a Miura transformation and the corresponding modified system, namely gsmKdV equation. Consequently, several BDTs are constructed for gsmKdV equation and from them BDTs both for Kupershmidt's smKdV and Hu's smKdV equations are recovered. As applications, some solutions for sKdV/smKdV equations are calculated.\looseness=-1

 This paper is organized as follows.
 In Section \ref{sec2}, we introduce some basic knowledge including Grassmann algebra, differential and integral operators.
 In Section \ref{sec3}, we first review the gsKdV system, then construct a Miura transformation and the corresponding gsmKdV equation. We also derive the spectral problem for the gsmKdV equation based on that of the gsKdV equation. From the gsKdV and gsmKdV equations, we derive Kupershmit's sKdV/smKdV equations, Hu's smKdV equation, and Geng--Wu's sKdV equation. Additionally, we calculate the first negative flows for both the gsKdV and gsmKdV hierarchies.
 In Section \ref{sec4}, we present four BDTs for the gsKdV equation, and from the fourth one we recover a BDT for Kupershmidt's sKdV equation and obtain a BDT for Geng--Wu's sKdV equation.
 In Section \ref{sec5}, by considering the gauge matrix linear in the spectral parameter,
we construct five BTs of the gsmKdV equation, the first three are related to the four BTs of the gsKdV equation, the last one is used to derive the ones for Kupershmidt's smKdV and Geng--Wu's smKdV equations, respectively.
 In Section \ref{sec6}, the exact solutions of the gsKdV and gsmKdV equations are constructed. Section~\ref{sec7} summarizes the results.\looseness=-1

 \section{Preliminaries}\label{sec2}

In this section, we present some basic facts and properties of Grassmann algebras \cite{ber,yim}, differential and integral operators.

Let ${\mathcal G}$ be a ${\mathbb Z}_2$-graded algebra over a field of characteristics zero (such as ${\mathbb C}$ or ${\mathbb Q}$). Thus, ${\mathcal G}$ as a vector space is a direct sum ${\mathcal G}={\mathcal G}_0 \oplus {\mathcal G}_1$, such that ${\mathcal G}_i{\mathcal G}_j \subseteq {\mathcal G}_{i+j}$ (${\bmod} \,2$). Those elements of ${\mathcal G}$ that belong either to ${\mathcal G}_0$ or to ${\mathcal G}_1$ are called homogeneous, the ones in ${\mathcal G}_0$ are called even (bosonic, commuting), and those in ${\mathcal G}_1$ are called odd (fermionic or anticommuting).
In the article, we only consider homogeneous elements.

 Let $a$ be an element in $\cal G$, its parity $ |a|$ is defined to be $ 0 $ if $ a$ is even, and $ 1 $ if $ a$ is odd.
 The parity of the product $ab$ of two elements is a sum of their parities: $|ab|= |a|+|b|$.
Grassmann commutativity means that $ba=(-1)^{|a||b|}ab$ for any elements $a$ and $b$. In particular,
$ \xi \eta=-\eta \xi
$
 for all $\xi, \eta \in {\mathcal G}_1$ and even elements commute with all elements of ${\mathcal G}$. Moreover,
$\xi^n=\eta^n=0$, $ (\xi\eta)^n=0$,
for any integer $n\geq 2$. Thus any odd element is nilpotent.

Let $\bA$ and $\bD$ be $m\times m$ and $n\times n$ matrices with even entries, respectively.
Let $\bB$ and $\bC$ be~${m\times n}$ and $n\times m$ matrices with odd entries, respectively.
Then the super matrices considered in the present paper are of block type
$
 \bW=\bigl(\begin{smallmatrix}
 \bA& \bB\\
 \bC & \bD
 \end{smallmatrix}
 \bigr)$.
Such matrices in the standard format are even matrices \cite{yim}. Supertrace of $\bW$ is defined by
\[
\operatorname{str}({\bW})=\operatorname{tr}({\bA})-\operatorname{tr}({\bD}).\]
If
${\bA}$ or ${\bD}$ is invertible, then
the Berezinian (superdeterminant) of $\bW$ is defined as
\begin{equation*}
\operatorname{Ber}(\bW)=\det (\bA)\det \bigl(\bD-\bC\bA^{-1}\bB\bigr)^{-1}=\det \bigl(\bA-\bB\bD^{-1}\bC\bigr)\det (\bD)^{-1}.
\end{equation*}
If both the blocks
${\bA}$ and ${\bD}$ are invertible, then $\bW$ is invertible and its inverse matrix is given~by
\begin{gather*}
 \bW^{-1}=\left(\begin{matrix}
 \bigl(\bA-\bB\bD^{-1}\bC\bigr)^{-1} & -\bA^{-1}\bB\bigl(\bD-\bC\bA^{-1}\bB\bigr)^{-1}\\
 -\bD^{-1}\bC\bigl(\bA-\bB\bD^{-1}\bC\bigr)^{-1} & \bigl(\bD-\bC\bA^{-1}\bB\bigr)^{-1}
 \end{matrix}\right).
\end{gather*}

Let $\partial_x$ and $\partial^{-1}_x$ denote differentiation and integration with respect to an even variable $x$, respectively, satisfying identity $\partial_x\partial_x^{-1} =1$. They obey the following identities:
\begin{align*}
\partial_x^n \circ a = \sum_{k=0}^{n}\binom{n}{k}{\bigl(\partial_x^k a\bigr)} \partial_x^{n-k},\qquad
\partial_x^{-1}\circ a \partial_x^{n}=\sum_{k=0}^{n-1} (-1)^k {\bigl(\partial_x^k a\bigr)}\partial_x^{n-1-k}
+(-1)^{n} \partial_x^{-1}\circ (\partial_x^n a),
\end{align*}
where $a \in {\mathcal G}$, ${\partial_x^k a}=\frac{\partial^k a}{\partial x^k}$,
$n$ is a positive integer.
Therefore, the following formulae hold:
\begin{gather*}
\partial_x(\xi\eta)=(\xi \eta)_x= \xi_x \eta+\xi\eta_x,
\qquad
(\xi \xi_x)_x=\xi_x\xi_{x}+\xi\xi_{xx}=\xi\xi_{xx},
\\
\partial^{-1}_x (\xi_x) =\xi +\mu,\qquad
\partial^{-1}_x (\xi\eta_x) =\xi \eta-\partial^{-1}_x(\xi_x \eta ),\qquad
\partial^{-1}_x (\xi \xi_{xx}) =\xi \xi_x +c,
\end{gather*}
where $\xi, \eta \in {\mathcal G}_1$, $\mu$ and $c$ are odd and even integration constants, respectively.

\begin{Remark}
In the following sections, unless otherwise stated, Latin letters denote bosonic field variables, while Greek letters represent fermionic ones.
\end{Remark}

\section{Super KdV type hierarchies}\label{sec3}


In this section, we first employ a Miura transformation on the gsKdV equation to derive the gsmKdV equation. We then consider two reductions for both systems. Next, we analyze the flows within the gsKdV and gsmKdV hierarchies, in particular, we provide the first negative flows for both hierarchies.

\subsection{Super KdV type equations and Miura transformations}\label{sec3.1}
Let us start from the gsKdV equation of Holod and Pakuliak \cite{hp}
 \begin{subequations}\label{gskdv}
\begin{gather}
u_t=u_{xxx}-6u u_x+6 (\xi_{x}\eta- \xi\eta_{x})_x,\qquad
\eta_{t}=4\eta_{xxx}-6 u \eta_{x}-3 u_x \eta,\\
\xi_t= 4\xi_{xxx}-6 u \xi_{x}-3 u_{x}\xi,
\end{gather}
\end{subequations}
where subscripts denote partial derivatives, $t $ and $x$ are temporal and spatial variables, respectively; $ u= u(x,t)$ is bosonic
 field variable, $\xi=\xi(x,t)$, $ \eta=\eta(x,t)$ are fermionic field variables.
In addition to the space and time shift invariant, the system \eqref{gskdv} is also invariant under the following transformations:
\begin{gather}
u\rightarrow u-c,\qquad \partial_t\rightarrow \partial_t+6c \partial_x,\qquad \partial_x\rightarrow \partial_x;
\label{Ga}
\\
\xi\rightarrow c \xi, \qquad \eta\rightarrow c^{-1}\eta;\nonumber\\
u\rightarrow c^2 u,  \qquad \xi\rightarrow c^{\frac{3}{2}} \xi, \qquad \eta\rightarrow c^{\frac{3}{2}}\eta, \qquad\partial_x \rightarrow c \partial_x, \qquad \partial_t \rightarrow c^3 \partial_t,\nonumber
\end{gather}
where $c$ is a bosonic constant.

System \eqref{gskdv} has two different spectral problems, one is due to Holod and Pakuliak \cite{hp} and the other, given by Lou \cite{lou} recently, is
\begin{align}
\lambda \varphi=\varphi_{xx}-u\varphi-\frac{1}{2}\xi\partial^{-1} (\eta \varphi)+\frac{1}{2}\bigl(\partial^{-1}( \xi \varphi)\bigr) \eta,\qquad\varphi_ t =4\varphi_{xxx}-6u \varphi_x-3 u_x \varphi,\label{sp1}
\end{align}
where $\varphi\equiv\varphi(x,t;\lambda)$ is a wave function,
 $\lambda$ is a bosonic spectral parameter.
 Suppose that $\varphi$ is bosonic,
 let
$
 q\equiv \varphi_x/\varphi$, $ \zeta \equiv \bigl(\partial_x^{-1} (\eta \varphi)\bigr)/ \varphi$, $ \delta\equiv \bigl(\partial_x^{-1} (\xi \varphi)\bigr)/ \varphi$,
then \eqref{sp1} leads to
\begin{align}\label{G1}
u=q_{x}+q^2 -\frac{1}{2}(\delta_{x}\zeta-\delta \zeta_{x})-\lambda,\qquad
\eta=\zeta_{x}+q \zeta,\qquad
\xi=\delta_{x}+q \delta.
\end{align}
Inserting \eqref{G1} into \eqref{gskdv} gives rise to
\begin{gather*}
q_t=6\lambda q_x+\bigl(
q_{xx}-2 q^3+3q (\delta_{x}\zeta-\delta \zeta_{x})
\bigr)_x
+\frac{3}{2}(\delta_{x}\zeta-\delta \zeta_{x})_{xx},\\
\zeta_{t}= 6\lambda \zeta_x+ 4\zeta_{xxx}+6\bigl(q_{x}-q^2\bigr)\zeta_{x}+3\bigl(q_{x}-q^2\bigr)_x\zeta
+3 (\delta_{x}\zeta-\delta \zeta_{x}) \zeta_x -\frac{3}{2}(\delta_{x}\zeta-\delta \zeta_{x})_x \zeta,\\
\delta_{t}=6\lambda \delta_x+ 4\delta_{xxx}+6\bigl(q_{x}-q^2\bigr)\delta_{x}+3\bigl(q_{x}-q^2\bigr)_x\delta
+3 (\delta_{x}\zeta-\delta \zeta_{x}) \delta_x -\frac{3}{2}(\delta_{x}\zeta-\delta \zeta_{x})_x \delta.
\end{gather*}

 The argument presented thus far remains valid if $\xi$, $ \eta$, $ \zeta$, $ \delta$ are so-called Ren-fields \cite{lou}. However, in the rest of this paper, we consider only the case where they are fermionic fields.

It is noted that spectral problem \eqref{sp1} may be related to the one derived by Holod and Pakuliak. To see it,
by introducing $y=y(x,t;\lambda)$ such that
$y=q+\frac{1}{2}\delta\zeta,
$
 from \eqref{G1} one obtains
\begin{align}\label{m1}
u=y_{x}+y^2 -\xi \zeta-\lambda,\qquad
\eta=\zeta_{x}+y \zeta,\qquad
\xi=\delta_{x}+y \delta.
\end{align}
Furthermore, introduce a bosonic nonzero variable $\psi\equiv \psi(x,t;\lambda)$ such that
 $y={\psi_{x}}/{\psi}$. Then the last two equations of \eqref{m1} give
$
\zeta= \bigl( {\partial^{-1}_x(\eta\psi)}\bigr)/{ \psi}$, $
\delta =\bigl( {\partial^{-1}_x(\xi\psi)}\bigr)/{ \psi }$,
and the first equation of~\eqref{m1} leads to the spatial part of spectral problem \cite{hp}
\begin{align}\label{spec}
L \psi=\lambda \psi,\qquad L=\partial^2_x-u-\xi\partial_x^{-1}\circ \eta,
\end{align}
and the related temporal part is
\begin{equation}\label{psit}
\psi_t =4\bigl(L^{\frac{3}{2}}\bigr)_{+} \psi, \qquad 4\bigl(L^{\frac{3}{2}}\bigr)_{+}=4\partial_x^3-6u\partial_x-3u_x-6\xi\eta,
\end{equation}
where $(\,)_{+}$ refers to the differential part of a pseudo-differential operator.
Let$\bPsi=\bigl(\psi, \psi_x,\allowbreak\partial^{-1}(\eta\psi)\bigr)\smash{{}^{\mathsf{T}}}$, one may rewrite \eqref{spec} and \eqref{psit} as the matrix form
\begin{subequations}\label{skdv_sp}
\begin{align}\label{skdv_sp1}
&\bPsi_{x}={\bL}\bPsi,\qquad
{\bL}= \left(
\begin{matrix} 0& 1 & 0
\\
\lambda +u& 0 & \xi
\\
\eta & 0 & 0
\end{matrix}
\right),\\
&\bPsi_t={\bT} \bPsi,\qquad
{\bT}= \left(
\begin{matrix} u_x-2\xi\eta & 4\lambda-2u & 4\xi_x
\\
T_{21}& -u_x-2\xi\eta & 4\xi_{xx}+(4\lambda-2u)\xi
\\
4\eta_{xx}+(4\lambda-2u)\eta & -4\eta_x & -4\xi\eta
\end{matrix}
\right),\label{skdv_sp2}
\end{align}
\end{subequations}
where $T_{21}=u_{xx}+(\lambda+u)(4\lambda-2u)+2(\xi_x\eta-\xi\eta_x)$.

Notice that the above variables $y$, $ q$, $ \zeta$, $ \delta$ depend on $\lambda$. In fact,
under the Galilean boost \eqref{Ga} with $c=\lambda$, $\lambda$ can be eliminated.
Thus from \eqref{G1} we obtain the following Miura transformation:%
\begin{align}\label{miura2}
u=q_x+q^2 -\frac{1}{2}(\delta_{x}\zeta-\delta \zeta_{x}),\qquad
\eta=\zeta_{x}+q \zeta,\qquad
\xi=\delta_{x}+q \delta,
\end{align}
and the corresponding gsmKdV equation is given by
 \begin{subequations}\label{gsmkdv}
\begin{gather}
q_t=\bigl(
q_{xx}-2 q^3+3q (\delta_{x}\zeta-\delta \zeta_{x})
\bigr)_x
+\frac{3}{2}(\delta_{x}\zeta-\delta \zeta_{x})_{xx},\\
\zeta_{t}= 4\zeta_{xxx}+6\bigl(q_{x}-q^2\bigr)\zeta_{x}+3\bigl(q_{x}-q^2\bigr)_x\zeta
+3 (\delta_{x}\zeta-\delta \zeta_{x}) \zeta_x -\frac{3}{2}(\delta_{x}\zeta-\delta \zeta_{x})_x \zeta,\\
\delta_{t}= 4\delta_{xxx}+6\bigl(q_{x}-q^2\bigr)\delta_{x}+3\bigl(q_{x}-q^2\bigr)_x\delta
+3 (\delta_{x}\zeta-\delta \zeta_{x}) \delta_x -\frac{3}{2}(\delta_{x}\zeta-\delta \zeta_{x})_x \delta,
\end{gather}
\end{subequations}
 where
$
 q= q(x,t)$, $ \zeta= \zeta(x,t)$, $ \delta=\delta(x,t).
$
By
$
 y=q+\frac{1}{2}\delta\zeta,
$
\eqref{gsmkdv} may be equivalently formulated~as
\begin{gather*}
y_t = \bigl(y_{xx}-2 y^3\bigr)_x+3 (\zeta_{x}\delta_{x}-\zeta \delta_{xx}+y \zeta_{x} \delta-y \zeta \delta_x-y_x \zeta \delta)_x,\\
\zeta_{t}= 4\zeta_{xxx}+6\bigl(y_x-y^2\bigr)\zeta_{x}+3 \bigl(y_x-y^2\bigr)_{x}\zeta+3 \zeta_x(\delta_{x}+y \delta)\zeta,\\
\delta_t= 4\delta_{xxx}+6 \bigl(y_x-y^2 \bigr) \delta_{x}+3 \bigl(y_x-y^2\bigr)_{x}\delta+3 \zeta \delta_{xx} \delta -3(\zeta_x- y\zeta) \delta_x \delta.
\end{gather*}
It is interesting to point out that, through $\xi=\delta_{x}+y \delta$, above system may also be brought to Hu's smKdV equation \cite{hu}
 \begin{subequations}\label{smkdv1}
\begin{gather}
y_t= \bigl(y_{xx}-2 y^3+3 \zeta_{x}\xi-3\zeta \xi_x\bigr)_x,\\
\zeta_{t}= 4\zeta_{xxx}+6\bigl(y_x-y^2\bigr)\zeta_{x}+3 \bigl(y_x-y^2\bigr)_{x}\zeta+3 \zeta_x\xi\zeta,\\
\xi_t= 4\xi_{xxx}-6 \bigl(y_x+y^2 \bigr) \xi_{x}-3 \bigl(y_x+y^2\bigr)_{x}\xi-3\xi_x\zeta \xi.
\end{gather}
\end{subequations}

The Lax representation or spectral problem of the gsmKdV equation \eqref{gsmkdv} may be obtained from the spectral problem of gsKdV equation. Indeed,
consider the following gauge transformation:
\begin{align*}
\bPhi={\bG}\bPsi,
\qquad
{\bG}= \left(
\begin{matrix} 1 & 0 & 0
\\
-q+\frac{1}{2}\delta\zeta & 1 & -\delta
\\
-\zeta &0 &1
\end{matrix}
\right). 
\end{align*}
Then from \eqref{skdv_sp} and Miura transformation \eqref{miura2}, we have
\begin{subequations}\label{smkdv_sp}
\begin{align}
&\bPhi_{x}={\bU}\bPhi,\qquad
{\bU}= \left(
\begin{matrix} q+\frac{1}{2}\delta\zeta & 1 & \delta
\\
\lambda& -q+\frac{1}{2}\delta\zeta & 0
\\
0 &-\zeta & \delta\zeta
\end{matrix}
\right),\label{smkdv_sp1}\\
&\bPhi_{t}={\bM}\bPhi,
\qquad
{\bM}=( \bG_{t} +\bG \bT)\bG^{-1}.\label{smkdv_sp2}
\end{align}
\end{subequations}
 As $y=q+\frac{1}{2}\delta\zeta$, $ \zeta=\zeta$, $ \delta=\delta$ is an invertible change of variables, above $\bU$, $\bM$ may be equivalently rewritten in terms of $y$, $\zeta$ and $\delta$.

Now we consider the possible reductions of above three-component systems \eqref{gskdv} and \eqref{gsmkdv} together with the Miura transformation \eqref{miura2}, there are two cases as follows.

\begin{Reduction}
Consider the reduction condition $\zeta=\delta$ which implies $\xi=\eta$. Then from \eqref{gskdv} and \eqref{gsmkdv}, we have
Kupershmidt's sKdV equation
 \begin{gather}
 u_t=u_{xxx}-6uu_x+12\eta_{xx}\eta, \qquad
\eta_t=4\eta_{xxx}-6u\eta_x-3u_x\eta,\label{kuper-skdv}
 \end{gather}
and Kupershmidt's smKdV equation
\begin{subequations}\label{kuper-smkdv}
\begin{gather}
q_t=\bigl(
q_{xx}-2 q^3-6q \zeta \zeta_{x}
\bigr)_x
-{3}(\zeta\zeta_{x})_{xx},\\
\zeta_{t}= 4\zeta_{xxx}+6\bigl(q_{x}-q^2\bigr)\zeta_{x}+3\bigl(q_{x}-q^2\bigr)_x\zeta,
\end{gather}
\end{subequations}
 respectively.
The corresponding Miura transformation is given by
$
u=q_x+q^2 +\frac{1}{2}(\zeta_{x}\zeta-\zeta \zeta_{x} )$, $
\eta=\zeta_{x}+q \zeta$.
Thus, the results of Kupershmidt are recovered \cite{kuper1}.
\end{Reduction}

\begin{Reduction} Consider the condition $ \delta=
 \zeta_{xx}\bigl(1-\frac{1}{2} \zeta_x\zeta\bigr)+\bigl(q_x-q^2\bigr) \zeta
 $ that implies
 $\xi=\eta_{xx}-u \eta$.
Under this reduction, \eqref{gskdv} is reduced to
\begin{subequations}\label{geng0}
\begin{align}
u_t=&u_{xxx}-6u u_x+6(\eta_x\eta_{xx}-\eta\eta_{xxx}+2 u\eta\eta_{x})_x,\\
\eta_t=& 4\eta_{xxx}-6 u \eta_{x}-3 u_{x}\eta,
\end{align}
\end{subequations}
which is equivalent to Geng--Wu's sKdV equation \cite{gw}
\begin{subequations}\label{geng}
\begin{gather}
v_t=v_{xxx}-6v v_x-12 v \eta_{xx}\eta-6 v_x \eta_x \eta+3 \eta_{xxxx}\eta+6 \eta_{xxx} \eta_{x},\\
\eta_t= 4\eta_{xxx}-6 v \eta_{x}-3 v_{x}\eta,
\end{gather}
\end{subequations}
by $v=u+\eta\eta_x$. And the modified equation \eqref{gsmkdv} becomes
\begin{subequations}\label{m-geng}
\begin{gather}
q_t=\bigl(
q_{xx}-2 q^3+3q a
\bigr)_x
+\frac{3}{2}a_{xx},\\
\zeta_{t}= 4\zeta_{xxx}+6\bigl(q_{x}-q^2\bigr)\zeta_{x}+3\bigl(q_{x}-q^2\bigr)_x\zeta
+3 \zeta_{xxx}\zeta \zeta_x,
\end{gather}
\end{subequations}
with $a\equiv \zeta_{xxx}\zeta-\zeta_{xx}\zeta_x +2\bigl(q_x-q^2\bigr)\zeta_x \zeta$.
The corresponding Miura transformation is given by~${
v=q_x+q^2-q\zeta_{xx}\zeta -\frac{1}{2}(\zeta_{xx}\zeta)_x}$, $ \eta=\zeta_{x}+q \zeta$.
\end{Reduction}

\begin{Remark}
It is remarked that the super KdV and mKdV equations described above are fermionic extensions of their classical counterparts, rather than supersymmetric ones.
Taking Kupershmidt's sKdV equation \eqref{kuper-skdv} as an example, we now compare it with the supersymmetric KdV equation. The well-known supersymmetric KdV equation in component form \cite{MR,mathieu} is given by
\begin{gather}
u_t=u_{xxx}+6uu_x-3\eta\eta_{xx},\qquad
\eta_t=\eta_{xxx}+3(u\eta)_x.\label{susykdv1}
\end{gather}
As observed by Mathieu \cite{mathieu}, while \eqref{kuper-skdv} and \eqref{susykdv1} appear quite similar, they are fundamentally different. In fact,
 \eqref{susykdv1}
 is invariant under the following supersymmetric transformation
$
 u \rightarrow u+\mu\eta_x$, $ \eta\rightarrow \eta+\mu u$,
where $\mu$ is a fermionic parameter. In this situation, by extending the spatial variable $x$ to a~doublet $(x, \theta)$ where $\theta$ is an odd variable, one may introduce a super space derivative
${\cal D}=\partial_\theta+\theta\partial_x$, and a fermionic super field $\alpha=\alpha(t,x,\theta)$ such that
$
\alpha=\eta(x,t)+\theta u(x,t)$.
In this way~\eqref{susykdv1} is rewritten as the following form:
 \begin{equation}\label{susykdv}
\alpha_t=\alpha_{xxx}+3(\alpha ({\cal D}\alpha))_x.
\end{equation}
 However, \eqref{kuper-skdv} is not invariant under either of the following supersymmetric transformations~${
\eta \rightarrow \eta +\frac{\mu u}{c}}$, $ u \rightarrow u + c\mu \eta_x $,
or
$u \rightarrow u +\frac{ \mu \eta}{c}$, $ \eta \rightarrow \eta + c\mu u_x$,
where
$c$ is a certain bosonic constant.
The reason is that the coefficients of $u_t$, $u_{xxx}$ and $\eta_t$, $\eta_{xxx}$ in \eqref{kuper-skdv} or \eqref{geng0} are disproportionate thus it is impossible to combine these equations.
 Thus~\eqref{susykdv1} or \eqref{susykdv} is known as the supersymmetric KdV equation while 
 \eqref{kuper-skdv} and \eqref{geng0} are the fermionic or super KdV equations.
 \end{Remark}

\subsection{The flows of the gsKdV and gsmKdV hierarchies}

 Let us first consider the positive flows of the gsKdV hierarchy. In fact, the Lax pairs are given~by%
\begin{equation}
L \psi=\lambda \psi,\qquad
\psi_{t_N} =4\bigl(L^{\frac{N}{2}}\bigr)_{+}\psi,\label{lax-N}
\end{equation}
where $
 u= u(x,t_N)$, $ \xi= \xi(x,t_N)$, $ \eta=\eta(x,t_N)$, $ N=1, 3, 5,\dots, t_3=t$,
then Lax equation
\smash{$
L_{t_N}=\big[4\bigl(L^{\frac{N}{2}}\bigr)_{+}, L\big]
$}
provides $N$-th flow of the gsKdV hierarchy. One may also write \eqref{lax-N} as the matrix form
\begin{equation}\label{PsixtN}
\bPsi_x={\bL} \bPsi,\qquad \bPsi_{t_N}={\bT_{N}} \bPsi,
\end{equation}
where $\bT_{3}=\bT$, and the $N$-th flow of the gsKdV hierarchy can be derived by the zero curvature condition
\begin{equation}\label{zero}
\bL_{t_N}-\bT_{N,x}+[\bL,\bT_N]=0.
\end{equation}

From \eqref{PsixtN} and Miura transformation \eqref{miura2}, we have
\begin{gather}
\bPhi_{x}={\bU}\bPhi,\qquad
\bPhi_{t_N}={\bM}_N\bPhi,
\qquad
{\bM}_N=( \bG_{t_N} +\bG \bT_N)\bG^{-1},\label{smkdv_spN}
\end{gather}
where
$
 q= q(x,t_N)$, $ \zeta= \zeta(x,t_N)$, $ \delta= \delta(x,t_N)$, $ \bM_{3}=\bM$.
 Then the zero curvature condition
\begin{equation}\label{zero2}
\bU_{t_N}-\bM_{N,x}+[\bU,\bM_N]=0
\end{equation}
gives rise to $N$-th flow of the gsmKdV hierarchy.

It is possible to construct the negative flows. Let $N$ be a negative integer and take $\bT_N$, $ \bM_{N}$ as follows:
\begin{gather*}
\bT_{N} =\lambda^{N}\bT^{(N)}+\lambda^{N+1}\bT^{(N+1)}+\cdots +\bT^{(0)},\\
\bM_{N}=\lambda^{N}\bM^{(N)}+\lambda^{N+1}\bM^{(N+1)}+\cdots+ \bM^{(0)},
\end{gather*}
where \smash{$\bT^{(i)}$}, \smash{$\bM^{(i)}$}, $i=N,\dots,0$, are independent in $\lambda$. By choosing suitable \smash{$\bT^{(i)}$}, \smash{$\bM^{(i)}$},
 the zero curvature conditions \eqref{zero} and \eqref{zero2} yield
 the negative flows of the gsKdV and gsmKdV hierarchies, respectively. We provide one example below.

Let
\begin{align*}
 {\bT}_{-1}= \frac{1}{\lambda}\left(
\begin{matrix} \frac{Z-w_{x,t_{-1}}}{2} & w_{t_{-1}} & -\xi_{t_{-1}}
\\
\lambda w_{t_{-1}}+2w_xw_{t_{-1}}-\frac{1}{2}w_{xx,t_{-1}}+\frac{1}{2}(\xi\eta_{t_{-1}}-\xi_{t_{-1}}\eta) & \frac{Z+w_{x,t_{-1}}}{2} & w_{t_{-1}}\xi-\xi_{x,t_{-1}}
\\
\eta w_{t_{-1}}-\eta_{x,t_{-1}} & \eta_{t_{-1}} & Z
\end{matrix}
\right),
\end{align*}
where $u=2w_x$, $Z$ is determined by $Z_x=(\xi\eta)_{t_{-1}}$, then \eqref{zero} leads to{\samepage
\begin{subequations}\label{gskdv7}
\begin{gather}
w_{xxx,t_{-1}}=8w_xw_{x,t_{-1}}+4w_{xx}w_{t_{-1}}+3(\xi\eta_{x,t_{-1}}- \xi_{x,t_{-1}}\eta)+\xi_{x}\eta_{t_{-1}}-\xi_{t_{-1}}\eta_{x},\\
\xi_{xx,t_{-1}} =\frac{3}{2}w_{x,t_{-1}}\xi+2\xi_{t_{-1}}w_x+\xi_xw_{t_{-1}}-\frac{1}{2}\xi Z,\\
\eta_{xx,t_{-1}} =\frac{3}{2}w_{x,t_{-1}}\eta+2\eta_{t_{-1}} w_x+\eta_xw_{t_{-1}}+\frac{1}{2}\eta Z,
\end{gather}
\end{subequations}
which is the first negative flow of the gsKdV hierarchy.}

 Let
\begin{align*}
 {\bM}_{-1}= \left(
\begin{matrix}
0 & \frac{{\rm e}^{2r}}{\lambda} &0
\\
{\rm e}^{-2r} & 0 & -\delta_{t_{-1}}
\\
-\zeta_{t_{-1}} & 0 & 0
\end{matrix}
\right),
\end{align*}
where $q=r_x$, then \eqref{zero2} leads to
\begin{subequations}\label{sinh}
\begin{gather}
r_{x,t_{-1}}=2\sinh{2r}-\frac{1}{2}(\delta_{t_{-1}} \zeta+\zeta_{t_{-1}}\delta),\qquad
\delta_{x,t_{-1}} ={\rm e}^{-2r}\delta -r_x \delta_{t_{-1}} -\frac{1}{2} \delta\zeta \delta_{t_{-1}},\\
\zeta_{x,t_{-1}} ={\rm e}^{-2r}\zeta -r_x \zeta_{t_{-1}} -\frac{1}{2} \zeta\delta \zeta_{t_{-1}},
\end{gather}
\end{subequations}
which is the first negative flow of the gsmKdV hierarchy, i.e., a super extension of the sinh-Gordon equation.

Upon substituting the Miura transformation \eqref{miura2} and the equation \eqref{sinh} into \eqref{gskdv7}, we have
$
w_{t_{-1}}={\rm e}^{2r}$, $ Z= {\rm e}^{2r} \delta \zeta$.
The first equation is the temporal Miura transformation which is exactly the one appeared in~\cite{yfa}.

We may also consider the two reductions mentioned in Section \ref{sec3.1} and obtain the negative flows of Kupershmidt's sKdV hierarchy and Geng--Wu's sKdV hierarchy.

\begin{Reduction}
Consider the reduction condition $\zeta=\delta$ which implies $\xi=\eta$, $ Z=0$. Then from \eqref{gskdv7} and \eqref{sinh} we have the first negative flow of
Kupershmidt's sKdV hierarchy
 \begin{gather*}
w_{xxx,t_{-1}}=8w_xw_{x,t_{-1}}+4w_{xx}w_{t_{-1}}+6\eta\eta_{x,t_{-1}}+2\eta_{x}\eta_{t_{-1}},\\
\eta_{xx,t_{-1}} =\frac{3}{2}w_{x,t_{-1}}\eta+2\eta_{t_{-1}}w_x+\eta_xw_{t_{-1}},
 \end{gather*}
and the first negative flow of Kupershmidt's smKdV hierarchy
\begin{gather*}
r_{x,t_{-1}}=2\sinh{2r}-\zeta_{t_{-1}} \zeta,\qquad
\zeta_{x,t_{-1}} ={\rm e}^{-2r}\zeta -r_x \zeta_{t_{-1}},
\end{gather*}
 respectively. The corresponding Miura transformation is given by
\[
w_x=\frac{1}{2}\big(r_{xx}+r_x^2\big)-\frac{1}{2}\zeta_{x}\zeta,
 \qquad \eta=\zeta_{x}+r_x \zeta, \qquad w_{t_{-1}}={\rm e}^{2r}.
\]
\end{Reduction}

\begin{Reduction} Consider the condition $ \delta=
 \zeta_{xx}-\frac{1}{2}
 \zeta_{xx} \zeta_x\zeta+\bigl(r_{xx}-r_x^2\bigr) \zeta$ which gives
 $\xi=\eta_{xx}-2w_x \eta$, $ Z= (\eta_x\eta)_{t_{-1}}={\rm e}^{2r}\zeta_{xx}\zeta$.
Under this reduction, \eqref{gskdv7} is reduced to
\begin{gather*}
w_{xxx,t_{-1}}=8w_xw_{x,t_{-1}}+4w_{xx}w_{t_{-1}}+(3\eta\partial_{x}\partial_{t_{-1}}-3\eta_{x,t_{-1}}-\eta_{t_{-1}}\partial_{x}+\eta_{x}\partial_{t_{-1}}
)(\eta_{xx}-2w_x \eta),\\
\eta_{xx,t_{-1}} =\frac{3}{2}w_{x,t_{-1}}\eta+2\eta_{t_{-1}} w_x+\eta_xw_{t_{-1}}+\frac{1}{2}\eta \eta_{x}\eta_{t_{-1}},
\end{gather*}
whose modified equation becomes
\begin{gather*}
r_{x,t_{-1}}=2\sinh{2r}-\frac{1}{4}(\zeta_{t_{-1}}- \zeta\partial_{t_{-1}})\bigl(2\zeta_{xx}- \zeta_{xx}\zeta_x\zeta+2\bigl(r_{xx}-r_x^2\bigr) \zeta\bigr),\\
\zeta_{x,t_{-1}} ={\rm e}^{-2r}\zeta -r_x \zeta_{t_{-1}}-\frac{1}{2} \zeta\zeta_{xx} \zeta_{t_{-1}}.
\end{gather*}
The corresponding Miura transformation is given by
\[
w_x=\frac{1}{2}\bigl(r_{xx}+r_x^2\bigr)+\frac{1}{4}(\zeta_{xx}\zeta_x-\zeta_{xxx}\zeta)-\frac{1}{2}\zeta_x\zeta \bigl(r_{xx}-r_x^2\bigr),
 \qquad \eta=\zeta_{x}+r_x \zeta, \qquad w_{t_{-1}}={\rm e}^{2r}.
\]
\end{Reduction}

\section{B\"acklund--Darboux transformations for gsKdV equation}\label{sec4}
In this section, for gsKdV equation \eqref{gskdv} we first recall BDTs \cite{xl} and build new BDTs. Then we will show that a BDT for Geng--Wu's sKdV equation \eqref{geng} may be resulted by a proper reduction.
 For convenience, we introduce the potential variables defined by
\begin{equation*}
u=2w_x, \qquad \widetilde{u}=2\widetilde{w}_x,\qquad\widehat{u}=2\widehat{w}_x,\qquad \widehat{\widetilde{u}}=2\widehat{\widetilde{w}}_x,\qquad \overline{u}=2\overline{w}_x.
\end{equation*}



Let $\bPsi$ be the solution of \eqref{PsixtN}.
Consider a gauge transformation
\begin{equation}\label{W0}
\widetilde{\bPsi}=\bW\bPsi,\qquad
\end{equation}
such that
\begin{equation}\label{Psixt}
\widetilde{\bPsi}_x=\widetilde{\bL}\widetilde{\bPsi},\qquad \widetilde{\bPsi}_{t_N}=\widetilde{\bT}_N\widetilde{\bPsi},
\end{equation}
where $\widetilde{\bL}$ and $\widetilde{\bT}_N$ are $\bL$ and $\bT_N$ but with $u$, $ \xi$ and $\eta$ replaced by $\widetilde{u}$, $ \widetilde{\xi}$ and $\widetilde{\eta}$, respectively.
The compatibility of \eqref{PsixtN}, \eqref{W0} and \eqref{Psixt} lead to the conditions $\widetilde{\bPsi}_x=\bigl(\widetilde{\bPsi}\bigr)_x$ and $\widetilde{\bPsi}_{t_N}=\bigl(\widetilde{\bPsi}\bigr)_{t_N}$, which yield
\begin{subequations}\label{Wxt}
\begin{align}
\label{Wx}
&\bW_x+\bW \bL- \widetilde{\bL}\bW=0,\\
 &\bW_{t_N}+\bW \bT_N- \widetilde{\bT}_N\bW=0,\label{Wt}
\end{align}
\end{subequations}
respectively.
The gauge transformation $\widetilde{\bPsi} = \bW\bPsi$, together with condition~\eqref{Wx},
 defines a DBT for the spectral problem $\bPsi_x=\bL\Psi$,
 while the transformation, combined with~\eqref{Wxt}, yields a DBT for~\eqref{PsixtN}, i.e., for $t_N$-flow of the gsKdV hierarchy. Since all flows share the identical spatial part of the associated linear problem, the entire hierarchy possesses the congruent Darboux matrix and spatial part of the BT, and from \eqref{Wt} one may obtain the temporal part of BT.
 Due to the associated Lax matrices are traceless, hence the derived Darboux matrices have constant determinants. In the following we mainly consider the $N=3$ case, namely the third-order flow.

Suppose that $\bW$ is linear in $\lambda$. Then we have the following four DBTs for \eqref{skdv_sp} and the related systems.

\begin{Proposition}\label{prp1}Let
\begin{gather}
\widetilde{\bPsi}={\bW}_1\bPsi,\nonumber
\\
{\bW}_1=\bW_1(\lambda_1, w,\xi,\widetilde{w},\widetilde{\eta})= \left(
\begin{matrix}
\widetilde{w}-w & 1 &0
\\
 \lambda- \lambda_{1}+{(\widetilde{w}-w)}^2+\xi\widetilde{\eta} & \widetilde{w}-w & {\xi}
\\
\widetilde{\eta}& 0 & 1
\end{matrix}
\right).\label{dt1}
\end{gather}
Then \eqref{skdv_sp1}, \eqref{dt1} and $\widetilde{\bPsi}_x=\widetilde{\bL}\widetilde{\bPsi}$ are compatible if and only if
the following BT
\begin{align}\label{bt1}
\widetilde{\eta}_x=-\eta+(\widetilde{w}-w)\widetilde{\eta},
\qquad
\xi_x=\widetilde{\xi}-(\widetilde{w}-w)\xi,\qquad
\widetilde{w}_x=-w_{x}+(\widetilde{w}-w)^2- \lambda_1+\xi\widetilde{\eta}
\end{align}
holds. This BT will be referred to as ${\rm BT}_{\rm sKdV}\bigl(\lambda_1; w,\xi,\eta,\widetilde{w},\widetilde{\xi},\widetilde{\eta}\bigr)$.
\end{Proposition}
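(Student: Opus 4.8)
The plan is to verify directly the spatial half of the compatibility conditions \eqref{Wxt}, namely \eqref{Wx}, with $\bW=\bW_1$ the Darboux matrix \eqref{dt1}. As explained in the paragraph preceding the proposition, the gauge relation $\widetilde{\bPsi}=\bW_1\bPsi$ is compatible with $\bPsi_x=\bL\bPsi$ and $\widetilde{\bPsi}_x=\widetilde{\bL}\widetilde{\bPsi}$ precisely when $\bW_{1,x}+\bW_1\bL-\widetilde{\bL}\bW_1=0$, so the whole statement reduces to showing that this matrix identity is equivalent to \eqref{bt1}. First I would note that $\bW_1$, $\bL$ and $\widetilde{\bL}$ are each affine in the spectral parameter, and that their single $\lambda$-entries sit in positions that never share a summation index in the products $\bW_1\bL$ and $\widetilde{\bL}\bW_1$; hence neither product carries a $\lambda^2$ term and the whole left-hand side is affine in $\lambda$. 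Writing $p:=\widetilde{w}-w$ for brevity, I would then collect the coefficients of $\lambda^1$ and $\lambda^0$ separately.

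At order $\lambda^1$ the $\lambda$-linear parts cancel pairwise in the $(1,1)$, $(2,2)$ and $(2,1)$ entries (the only entries containing $\lambda$), so the leading order imposes no condition and merely confirms that the ansatz \eqref{dt1} is admissible. At order $\lambda^0$ I would read off the scalar equations entry by entry. The $(3,1)$ entry gives $\widetilde{\eta}_x+\eta-p\widetilde{\eta}=0$, i.e.\ the first relation of \eqref{bt1}; the $(2,3)$ entry gives $\xi_x+p\xi-\widetilde{\xi}=0$, i.e.\ the second; and the $(1,1)$ and $(2,2)$ entries both reduce, after using $u=2w_x$, $\widetilde{u}=2\widetilde{w}_x$ together with the automatic identity $\widetilde{u}-u=2p_x$, to the third relation $\widetilde{w}_x=-w_x+p^2-\lambda_1+\xi\widetilde{\eta}$ (so these two entries are consistent rather than independent). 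The entries $(1,2)$, $(1,3)$, $(3,2)$, $(3,3)$ vanish identically.

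The one entry that requires genuine work is $(2,1)$, which after cancelling its $\lambda$-term reads $2pp_x+\xi_x\widetilde{\eta}+\xi\widetilde{\eta}_x-p(\widetilde{u}-u)+\xi\eta-\widetilde{\xi}\widetilde{\eta}=0$. I expect this to be the main obstacle, not because it is deep but because it is the only place where Grassmann signs must be handled with care: substituting the first two relations of \eqref{bt1} turns $\xi\widetilde{\eta}_x$ into $-\xi\eta+p\xi\widetilde{\eta}$ and $\xi_x\widetilde{\eta}$ into $\widetilde{\xi}\widetilde{\eta}-p\xi\widetilde{\eta}$, while $-p(\widetilde{u}-u)=-2pp_x$ cancels the $2pp_x$; the remaining odd-odd products $\xi\eta$, $\widetilde{\xi}\widetilde{\eta}$ and $p\xi\widetilde{\eta}$ then cancel in pairs, so $(2,1)$ holds automatically and contributes no new constraint. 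Collecting these observations shows that \eqref{Wx} is equivalent to the three relations \eqref{bt1}, which is the asserted equivalence. Finally, since all flows share the spatial problem \eqref{skdv_sp1}, this single computation simultaneously furnishes the Darboux matrix and the spatial part of the BT for the whole hierarchy, the temporal part being recovered separately from \eqref{Wt}.
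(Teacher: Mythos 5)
Your proposal is correct and follows exactly the route the paper intends: the paper states Proposition \ref{prp1} without a printed proof, relying on the framework of \eqref{Wxt}, and the implicit verification is precisely the entry-by-entry evaluation of \eqref{Wx} that you carry out, with the $(3,1)$, $(2,3)$ and $(1,1)$/$(2,2)$ entries yielding the three relations of \eqref{bt1} and the $(2,1)$ entry vanishing as a consequence of them. Your handling of the Grassmann signs in the $(2,1)$ entry and the observation that no $\lambda^2$ terms arise are both accurate, so nothing is missing.
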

\begin{Proposition}\label{prop2}
Let
\begin{align}
\label{dt2}
\widehat{\bPsi}={\bW}_2\bPsi,
\qquad
{\bW}_2=
 \left(
\begin{matrix}
\widehat{w}-w & 1 & -\widehat{\xi}
\\
 \lambda- \lambda_2+{(\widehat{w}-w)}^2 & \widehat{w}-w & \bigl(w-\widehat{w}\bigr) \widehat{\xi}
\\
\bigl(w-\widehat{w}\bigr)\eta & -\eta & \lambda- \lambda_2 -\widehat{\xi}\eta
\end{matrix}
\right).
\end{align}
Then \eqref{skdv_sp1}, \eqref {dt2} and $\widehat{\bPsi}_x=\widehat{\bL}\widehat{\bPsi}$ are consistent if and only if
the following BT
\begin{align}\label{bt2}
{\eta}_x=-\widehat{\eta}-\bigl(\widehat{w}-w\bigr){\eta},\qquad
\widehat{\xi}_x=\xi+\bigl(\widehat{w}-w\bigr)\widehat{\xi},\qquad
\widehat{w}_x=-w_{x}+{\bigl(\widehat{w}-w\bigr)}^2- \lambda_2+\widehat{\xi}{\eta}
\end{align}
holds. This BT will be denoted by \smash{${\rm BT}_{\rm sKdV}\bigl(\lambda_2;\widehat{w},\widehat{\xi},\widehat{\eta}, w,\xi,\eta\bigr)$}.
\end{Proposition}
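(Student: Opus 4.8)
The plan is to reduce the asserted consistency to the single matrix identity established earlier as \eqref{Wx}. The gauge transformation \eqref{dt2} is precisely $\widehat{\bPsi}=\bW_2\bPsi$, so by the derivation that produced \eqref{Wx} the spatial problems \eqref{skdv_sp1} and $\widehat{\bPsi}_x=\widehat{\bL}\widehat{\bPsi}$ are consistent with \eqref{dt2} if and only if $\bW_{2,x}+\bW_2\bL-\widehat{\bL}\bW_2=0$, where $\bL$ carries the fields $u=2w_x,\xi,\eta$ and $\widehat{\bL}$ carries $\widehat{u}=2\widehat{w}_x,\widehat{\xi},\widehat{\eta}$. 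Thus the whole statement becomes the claim that this $3\times3$ identity is equivalent to \eqref{bt2}, and I would establish both implications at once by expanding the identity entrywise.

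Before expanding I would record two structural facts that keep the computation short. Because the only $\lambda$-dependent entries of $\bW_2$, $\bL$ and $\widehat{\bL}$ never line up in a matrix product, both $\bW_2\bL$ and $\widehat{\bL}\bW_2$ are affine (not quadratic) in $\lambda$; moreover their $\lambda^1$-coefficients cancel in the difference $\bW_2\bL-\widehat{\bL}\bW_2$, which confirms that the linear-in-$\lambda$ ansatz \eqref{dt2} is consistent and leaves only $\lambda$-independent scalar equations. Throughout I would track Grassmann signs carefully in the odd entries $(1,3),(2,3),(3,1),(3,2)$ and repeatedly use the nilpotency $\xi^2=\eta^2=\widehat{\xi}^2=0$.

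The three substantive relations then come from three entries. The $(1,1)$ entry, after inserting $u=2w_x$, yields the Riccati relation $\widehat{w}_x=-w_x+(\widehat{w}-w)^2-\lambda_2+\widehat{\xi}\eta$; the $(1,3)$ entry yields $\widehat{\xi}_x=\xi+(\widehat{w}-w)\widehat{\xi}$; and the $(3,2)$ entry yields $\widehat{\eta}=-\eta_x-(\widehat{w}-w)\eta$, i.e.\ the first relation of \eqref{bt2}. The remaining six entries I would show to be either trivial---$(1,2)$ reads $0=0$, while $(2,1)$ vanishes once $u=2w_x$, $\widehat{u}=2\widehat{w}_x$ are used and $(2,2)$ merely reproduces $(1,1)$---or forced by the three relations just obtained together with nilpotency: entries $(2,3),(3,1),(3,3)$ collapse to terms such as $\widehat{\xi}\eta\widehat{\xi}$ or $\widehat{\xi}\eta\eta$, which vanish. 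This proves both directions simultaneously: for ``only if'' the three relations are read off directly, and for ``if'' one substitutes \eqref{bt2} back and checks that every entry is zero.

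The main obstacle is purely the bookkeeping: correctly signing the products of the odd entries and verifying that the nine-entry identity, which looks over-determined, collapses to exactly the three relations of \eqref{bt2}. The two ingredients that make this work are the potential identifications $u=2w_x$, $\widehat{u}=2\widehat{w}_x$ (which kill the $(2,1)$ entry and the $\lambda^1$-terms) and the nilpotency of the fermionic fields (which kills the redundant odd entries); once these are invoked, no genuinely new relation survives beyond \eqref{bt2}.
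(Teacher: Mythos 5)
Your proposal is correct and is essentially the paper's own (implicit) argument: the paper establishes the general compatibility condition \eqref{Wx} and then verifies that the linear-in-$\lambda$ ansatz \eqref{dt2} satisfies it precisely when \eqref{bt2} holds, which is exactly your entrywise expansion. Your bookkeeping matches what actually happens — the $(1,1)$, $(1,3)$, $(3,2)$ entries give the three relations, $(2,1)$ and $(2,2)$ die after substituting $u=2w_x$, $\widehat{u}=2\widehat{w}_x$, and the remaining odd entries reduce to nilpotent products like $\widehat{\xi}\eta\widehat{\xi}$ and $\widehat{\xi}\eta\eta$ (entry $(3,3)$ in fact cancels identically, even without nilpotency).
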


The BDTs presented above are elementary BDTs. Now we use them to construct a compound BDT. Let us
consider \smash{${\rm BT}_{\rm sKdV}\bigl(\lambda_1; \widehat{w}, \widehat{\xi}, \widehat{\eta}, \widehat{\widetilde{w}}, \widehat{\widetilde{\xi}}, \widehat{\widetilde{\eta}}\bigr)$}
 and \smash{${\rm BT}_{\rm sKdV}\bigl(\lambda_2;\widehat{w},\widehat{\xi},\widehat{\eta}, w,\xi,\eta\bigr)$}. Then, eliminating $\widehat{\eta}$, $ \widehat{\xi}_x$, $\widehat{w}_x$ in those BTs yields
 \begin{gather}
 \bigl(\widehat{\widetilde{\eta}}-\eta\bigr)_x =\bigl(\widehat{\widetilde{w}}-w\bigr)\widehat{\widetilde{\eta}}-\bigl(\widehat{w}-w\bigr) \bigl(\widehat{\widetilde{\eta}}-\eta\bigr), \label{eta12x}
\\
\widehat{\widetilde{\xi}}-\xi =\bigl(\widehat{\widetilde{w}}-w\bigr)\widehat{\xi}, \label{xi12x}
\\
\bigl( \widehat{\widetilde{w}}-w\bigr)_{x}=\lambda_2-\lambda_1+\bigl(\widehat{\widetilde{w}}-w\bigr)^2-2\bigl(\widehat{\widetilde{w}}-w\bigr) \bigl(\widehat{w}-w\bigr)+\widehat{\xi}\bigl(\widehat{\widetilde{\eta}}-{\eta}\bigr).\label{w12x}
 \end{gather}
Equations \eqref{xi12x} and \eqref{w12x} allow us to obtain
 \begin{equation}\label{xi12w12}
 \widehat{\xi}=\frac{\widehat{\widetilde{\xi}}-\xi}{\widehat{\widetilde{w}}-w},
\qquad
 \widehat{w}=w+\frac{\widehat{\widetilde{w}}-w}{2}+\frac{\lambda_2-\lambda_1-\bigl( \widehat{\widetilde{w}}-w\bigr)_{x}+\widehat{\xi}\bigl(\widehat{\widetilde{\eta}}-{\eta}\bigr)}{2\bigl(\widehat{\widetilde{w}}-w\bigr)}.
 \end{equation}
 Inserting them into \eqref{eta12x} and the last two equations of \eqref{bt2}, one finds
 a BT with two parameters, that is
 \begin{subequations}\label{bt3}
\begin{gather}
\bigl(\widehat{\widetilde{\eta}}-\eta\bigr)_x =
	\frac{1}{2}\bigl(\widehat{\widetilde{w}}-w\bigr)\bigl(\widehat{\widetilde{\eta}}+\eta\bigr)
	+\frac{\bigl(\lambda_1-\lambda_2+\widehat{\widetilde{w}}_x-w_x\bigr)\bigl(\widehat{\widetilde{\eta}}-\eta\bigr)}{2\bigl(\widehat{\widetilde{w}}-w\bigr)},\\
	\bigl(\widehat{\widetilde{\xi}}-\xi\bigr)_x =
	\frac{1}{2}\bigl(\widehat{\widetilde{w}}-w\bigr)\bigl(\widehat{\widetilde{\xi}}+\xi\bigr)
	+\frac{\bigl(\lambda_2-\lambda_1+\widehat{\widetilde{w}}_x-w_x\bigr)\bigl(\widehat{\widetilde{\xi}}-\xi\bigr)}{2\bigl(\widehat{\widetilde{w}}-w\bigr)},
	\\
	\bigl(\widehat{\widetilde{w}}-w\bigr)_{xx}= \bigl(\lambda_2+\lambda_1+2\widehat{\widetilde{w}}_x+2w_x\bigr)\bigl(\widehat{\widetilde{w}}-w\bigr)-\frac{1}{2}\bigl(\widehat{\widetilde{w}}-w\bigr)^3
+\frac{\bigl(\widehat{\widetilde{w}}-w\bigr)^2_{x}-(\lambda_2-\lambda_1)^2}{2\bigl(\widehat{\widetilde{w}}-w\bigr)}
	\nonumber\\
	\phantom{\bigl(\widehat{\widetilde{w}}-w\bigr)_{xx}= }{} -\frac{(\lambda_2-\lambda_1)\bigl(\widehat{\widetilde{\xi}}-\xi\bigr)\bigl(\widehat{\widetilde{\eta}}-\eta\bigr)}{\bigl(\widehat{\widetilde{w}}-w\bigr)^2}
	+\xi\widehat{\widetilde{\eta}}- \widehat{\widetilde{\xi}}\eta.
\end{gather}
\end{subequations}
The composition of these two DTs \eqref{dt1} and \eqref{dt2} leads to
\smash{$
\widehat{\widetilde{\bPsi}}
=\widehat{ \bW}_1 \widehat{\bPsi}=\widehat{ \bW}_1\bW_2\bPsi$}.
Thus we have the following proposition.

\begin{Proposition}\label{prop3}
Let
\begin{align}\label{dt3}
	\widehat{\widetilde{\bPsi}}={\bW_3}\bPsi,
\qquad
{\bW_3}
=&
\left(
\begin{matrix}
\lambda-\lambda_2+\bigl(\widehat{w}-w\bigr)\bigl(\widehat{\widetilde{w}}-{w}\bigr)
&\widehat{\widetilde{w}}-w & \bigl(w-\widehat{\widetilde{w}}\bigr)\widehat{\xi}\\
W_{21} & W_{22} & W_{23}
\\
\bigl(\widehat{w}-w\bigr)\bigl(\widehat{\widetilde{\eta}}-\eta\bigr) & \widehat{\widetilde{\eta}}-\eta
& \lambda- \lambda_2 +\widehat{\xi}\bigl(\widehat{\widetilde{\eta}}-\eta\bigr)
\end{matrix}
\right),
\end{align}
where
\begin{gather*}
	W_{21}=\big[\lambda-\lambda_{1}+\bigl(\widehat{\widetilde{w}}-\widehat{w}\bigr)^2+\widehat{\xi} \bigl(\widehat{\widetilde{\eta}}-\eta\bigr)\big]\bigl(\widehat{w}-w\bigr)+\big[\lambda-\lambda_{2}+\bigl(\widehat{w}-w\bigr)^2\big]\bigl(\widehat{\widetilde{w}}-\widehat{w}\bigr),\\
	W_{22}=\lambda-\lambda_{1}+\bigl(\widehat{\widetilde{w}}-w\bigr)\bigl(\widehat{\widetilde{w}}-\widehat{w}\bigr)+\widehat{\xi}\bigl( \widehat{\widetilde{\eta}}-\eta\bigr),\qquad
	W_{23}=-\big[\lambda_{2}-\lambda_{1}+\bigl(\widehat{\widetilde{w}}-w\bigr)\bigl(\widehat{\widetilde{w}}-\widehat{w}\bigr)\big]\widehat{\xi},
\end{gather*}
with the intermediate variables $\widehat{w}$ and $\widehat{\xi}$ are defined by \eqref{xi12w12}.
Then \eqref{skdv_sp1}, \eqref {dt3} and \smash{$\widehat{\widetilde\bPsi}_x=\widehat{\widetilde\bL}\widehat{\widetilde\bPsi}$} are consistent if and only if \eqref{bt3} is satisfied.
\end{Proposition}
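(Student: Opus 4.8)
The plan is to exploit the fact that, by construction, $\bW_3=\widehat{\bW}_1\bW_2$ is the composition of the two elementary Darboux matrices of Propositions~\ref{prp1} and~\ref{prop2}: here $\widehat{\bW}_1$ denotes the matrix $\bW_1$ of \eqref{dt1} evaluated on the hatted background, i.e.\ for ${\rm BT}_{\rm sKdV}\bigl(\lambda_1;\widehat{w},\widehat{\xi},\widehat{\eta},\widehat{\widetilde{w}},\widehat{\widetilde{\xi}},\widehat{\widetilde{\eta}}\bigr)$, while $\bW_2$ is as in \eqref{dt2}. The first step is to verify, by multiplying these two $3\times3$ super matrices entrywise and inserting the definitions \eqref{xi12w12} of the intermediate variables $\widehat{w}$ and $\widehat{\xi}$, that the product $\widehat{\bW}_1\bW_2$ reproduces the matrix $\bW_3$ displayed in \eqref{dt3}. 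The only delicate point here is the Grassmann bookkeeping: entries such as the $(3,3)$ slot require the anticommutation $\widehat{\widetilde{\eta}}\,\widehat{\xi}=-\widehat{\xi}\,\widehat{\widetilde{\eta}}$ to bring the result into the stated form, so I would fix a convention for ordering the odd factors and apply it uniformly.

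The conceptual core is a residual identity for compositions. Writing $\bL$, $\widehat{\bL}$, $\widehat{\widetilde{\bL}}$ for the spatial Lax matrices on the three backgrounds and using that $\bW_1$, $\bW_2$ are even super matrices (so the Leibniz rule and reassociation carry no extra signs), one obtains, by inserting and cancelling the term $\widehat{\bW}_1\widehat{\bL}\bW_2$,
\[
\bW_{3,x}+\bW_3\bL-\widehat{\widetilde{\bL}}\bW_3=\bigl(\widehat{\bW}_{1,x}+\widehat{\bW}_1\widehat{\bL}-\widehat{\widetilde{\bL}}\widehat{\bW}_1\bigr)\bW_2+\widehat{\bW}_1\bigl(\bW_{2,x}+\bW_2\bL-\widehat{\bL}\bW_2\bigr).
\]
By Proposition~\ref{prop2} the second bracket vanishes exactly when \eqref{bt2} holds, and by Proposition~\ref{prp1} (in hatted variables) the first bracket vanishes exactly when the hatted version of \eqref{bt1} holds.

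It then remains to show that the simultaneous validity of these two elementary BTs is equivalent to the two-parameter system \eqref{bt3}, which is precisely the elimination performed before the statement: the two first equations give \eqref{eta12x}, the two middle equations give \eqref{xi12x}, and the two last equations give \eqref{w12x}; solving \eqref{xi12x} and \eqref{w12x} for $\widehat{\xi}$ and $\widehat{w}$ yields \eqref{xi12w12}, and substituting into \eqref{eta12x} together with the last two equations of \eqref{bt2} produces \eqref{bt3}. Since each step is reversible once $\widehat{\xi},\widehat{w}$ are read as the abbreviations \eqref{xi12w12} and $\widehat{\eta}$ is recovered from the first equation of \eqref{bt2}, the two elementary BTs hold if and only if \eqref{bt3} does. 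This already gives the ``if'' direction: \eqref{bt3} forces both brackets above to vanish, whence $\bW_3$ satisfies \eqref{Wx}.

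For the converse I would argue directly. Because $\bW_3$ in \eqref{dt3} is an explicit function of the outer fields and $\lambda$ (with $\widehat{w},\widehat{\xi}$ mere abbreviations via \eqref{xi12w12}), the left-hand side of \eqref{Wx} is a polynomial in $\lambda$ of degree three, and equating its coefficients to zero yields a finite system. The main obstacle is to check that this system collapses to exactly \eqref{bt3} and imposes nothing further---in particular that the relations \eqref{xi12w12} are consistent with the top-degree coefficients and hide no extra constraint. I expect this to follow from the residual identity read in reverse, using the invertibility (constant Berezinian) of $\widehat{\bW}_1$ and $\bW_2$ to separate the two brackets, but the sign-sensitive coefficient matching is where the bulk of the verification will lie.
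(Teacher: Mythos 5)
Your main line of argument is exactly the paper's: $\bW_3$ is by construction the product $\widehat{\bW}_1\bW_2$ of the two elementary Darboux matrices (the paper derives Proposition~\ref{prop3} precisely this way, and your entrywise verification, including the Grassmann signs $\widehat{\xi}\widehat{\xi}=0$, $\widehat{\widetilde{\eta}}\widehat{\xi}=-\widehat{\xi}\widehat{\widetilde{\eta}}$, does check out); the residual identity you write is the standard composition identity, and the equivalence between the pair of elementary BTs and \eqref{bt3} is the elimination carried out in the text before the proposition. The forward implication --- \eqref{bt3} implies the compatibility condition \eqref{Wx} for $\bW_3$ --- is therefore complete and correct as you present it.

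The step that would fail is your proposed shortcut for the converse. Writing $R_1$, $R_2$ for the two brackets, the hypothesis of the converse gives only $R_1\bW_2+\widehat{\bW}_1R_2=0$, i.e., $R_1=-\widehat{\bW}_1R_2\bW_2^{-1}$; invertibility of $\widehat{\bW}_1$ and $\bW_2$ does \emph{not} let you conclude $R_1=0$ and $R_2=0$ separately, since a sum of two nonzero terms can vanish no matter how invertible the factors multiplying them are. So the converse cannot be obtained by ``reading the residual identity in reverse.'' What remains valid is your other route: expand $\bW_{3,x}+\bW_3\bL-\widehat{\widetilde{\bL}}\bW_3$ in $\lambda$ (note the degree is two, not three, since $\bW_3$ and $\bL$ are each linear in $\lambda$) and check that the resulting coefficient system, with $\widehat{w}$, $\widehat{\xi}$ treated as the abbreviations \eqref{xi12w12}, is equivalent to \eqref{bt3}. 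That computation is indeed where the work lies, and it is also all the paper itself offers: Proposition~\ref{prop3} carries no separate proof beyond the construction, the verification being the same ``tedious but straightforward'' check invoked explicitly for Proposition~\ref{prop11}.
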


Those BDTs may be reformulated in terms of solutions of the spectral problem \eqref{skdv_sp} and its adjoint problem
\begin{equation}\label{skdv_sp3}
\bPsi_{x}^+=-\bPsi^+{\bL},\qquad \bPsi_{t}^+=-\bPsi^+{\bT},
\end{equation}
where
$\bPsi^+=\bigl(- \psi^+_{x},\psi^+, -\partial^{-1}_x\bigl(\psi^+\xi\bigr)\bigr)$.
To do so, let \smash{$\bPsi_j=\bigl(\psi_j, \psi_{j,x}, \partial_x^{-1}(\eta\psi_j)\bigr)^{\mathsf{T}}$} be a solution of \eqref{skdv_sp} at $\lambda=\lambda_j$ and $\bPsi^+_j=\bigl(- \psi^+_{j,x},\psi^+_j, -\partial^{-1}_x\bigl(\psi_j^+\xi\bigr)\bigr)$ be a solution of \eqref{skdv_sp3}
 at $\lambda=\lambda_j$, where~\smash{${\psi_j=\psi_j(x,t;\lambda_j)}$}, \smash{$ \psi_j^+= \psi_j^+(x,t;\lambda_j)$} are bosonic, for $j=1, 2$. Introduce new variables $y_j$, $\zeta_j$, \smash{$y^+_j$}, \smash{$\delta_j^+$} such that
\begin{equation}
y_j\equiv \frac{\psi_{j,x}}{\psi_j},\qquad
\zeta_j\equiv \frac{\partial_x^{-1}(\eta\psi_j)}{ \psi_j},\qquad
y_j^+\equiv \frac{\psi^+_{j,x}}{\psi_j^+},\qquad
\delta_j^+\equiv\frac{\partial_x^{-1}\bigl(\xi\psi_j^+\bigr)}{ \psi_{j}^+ }.\label{transj}
\end{equation}
Then, it is easy to see that those variables satisfy
\begin{subequations}\label{y1x}
\begin{align}
& y_{j,x}=u-y_j^2 +\lambda_j+ \xi \zeta_j,\qquad \zeta_{j,x}=\eta-y_j \zeta_j,\\
&y^+_{j,x}=u-\bigl(y_j^+\bigr)^2 +\lambda_j+\eta\delta^+_j , \qquad
\delta_{j,x}^+=\xi- y_j^+ \delta_j^+.
\end{align}
\end{subequations}
The above equations imply that we may take
\begin{equation}\label{yj+}
y_j^+=y_j-\delta_j^+ \zeta_j.
\end{equation}

Using the variables defined by \eqref{transj}, we may rewrite the above three DBTs and obtain the following proposition.
\begin{Proposition} \label{prop4}
 Let
\begin{align}
&\widetilde{w}=w-y_1,
\qquad
\widetilde{\eta} = -\zeta_1,\qquad
 \widetilde{\xi}= \xi_x- {y_1\xi};
 \label{w1}\\
&
\widehat{w}=w -y_2^+,
\qquad \widehat{\eta}= -\eta_x+y_2^+\eta,
\qquad
\widehat{\xi}=\delta_2^+;
\label{w2}\\ &
\widehat{\widetilde{w}}={w}+
G,
\qquad
\widehat{\widetilde{\eta}}={\eta}+
{G\zeta_1 },
\qquad
\widehat{\widetilde{\xi}}=\xi+
{G\delta_2^+ },
\label{w3}
\end{align}
where
\[
G\equiv\frac{\lambda_1-\lambda_2}{ y_2^+-y_1+\delta_2^+\zeta_1},
\]
 $y_1$, $\zeta_1$, $y_2^+$, $\delta^+_2$ are defined by \eqref{transj}.
Then, for any $(\bPsi, u,\xi,\eta) $ satisfying equation~\eqref{skdv_sp}, \smash{$\bigl(\widetilde{\bPsi}, \widetilde{w}, \widetilde{\xi}, \widetilde{\eta}\bigr)$} given by \eqref{dt1}, \eqref{w1}, \smash{$\bigl(\widehat{\bPsi}, \widehat{w}, \widehat{\xi}, \widehat{\eta}\bigr)$} given by \eqref{dt2}, \eqref{w2}, and \smash{$\bigl(\widehat{\widetilde{\bPsi}}, \widehat{\widetilde{w}}, \widehat{\widetilde{\xi}}, \widehat{\widetilde{\eta}}\bigr)$} given by \eqref{dt3}, \eqref{w3} satisfy the following equations:
\[
\widetilde{\bPsi}_x=\widetilde{\bL}\widetilde{\bPsi}, \qquad
\widehat{\bPsi}_x=\widehat{\bL}\widehat{\bPsi}, \qquad
\widehat{\widetilde\bPsi}_x=\widehat{\widetilde\bL}\widehat{\widetilde\bPsi}, \qquad \widehat{\widetilde\bPsi}_t=\widehat{\widetilde\bT}\widehat{\widetilde\bPsi},
\]
 respectively.
\end{Proposition}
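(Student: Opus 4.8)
The plan is to lean on Propositions~\ref{prp1},~\ref{prop2} and~\ref{prop3}, each of which already shows that the spatial compatibility of the corresponding gauge transformation is \emph{equivalent} to its B\"acklund transformation. Consequently, to obtain the three spatial relations $\widetilde{\bPsi}_x=\widetilde{\bL}\widetilde{\bPsi}$, $\widehat{\bPsi}_x=\widehat{\bL}\widehat{\bPsi}$ and $\widehat{\widetilde\bPsi}_x=\widehat{\widetilde\bL}\widehat{\widetilde\bPsi}$, it suffices to verify that the closed-form substitutions \eqref{w1}, \eqref{w2} and \eqref{w3} satisfy the B\"acklund systems \eqref{bt1}, \eqref{bt2} and \eqref{bt3}, respectively. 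The only ingredients are the first-order relations \eqref{y1x} that $y_j,\zeta_j,y_j^+,\delta_j^+$ obey by construction, the identification \eqref{yj+}, the convention $u=2w_x$, and the nilpotency $\zeta_1^2=(\delta_2^+)^2=0$ of odd quantities.

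First I would settle \eqref{w1} and \eqref{w2}, which are immediate. Inserting $\widetilde w-w=-y_1$, $\widetilde\eta=-\zeta_1$, $\widetilde\xi=\xi_x-y_1\xi$ into \eqref{bt1}, the $\widetilde\eta$-equation becomes $\zeta_{1,x}=\eta-y_1\zeta_1$, the $\xi$-equation is just the definition of $\widetilde\xi$, and the $\widetilde w$-equation reduces through $u=2w_x$ to $y_{1,x}=u-y_1^2+\lambda_1+\xi\zeta_1$; these are exactly the three parts of \eqref{y1x} with $j=1$. The verification of \eqref{w2} against \eqref{bt2} is the mirror computation with the adjoint variables and $j=2$, the only point to watch being the anticommutation $\delta_2^+\eta=-\eta\delta_2^+$ that fixes a sign.

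The substantive step is \eqref{w3} versus \eqref{bt3}, and everything turns on controlling the denominator $D:=y_2^+-y_1+\delta_2^+\zeta_1$ appearing in $G=(\lambda_1-\lambda_2)/D$. Differentiating $D$ with \eqref{y1x}, the mixed terms $\eta\delta_2^+$ cancel against $\delta_2^+\eta$ and the odd squares drop out, leaving the clean law
\[
D_x=-(y_1+y_2^+)D-(\lambda_1-\lambda_2),\qquad\textrm{i.e.}\qquad G_x=(y_1+y_2^+)G+G^2.
\]
Substituting this, together with $G=(\lambda_1-\lambda_2)/D$ (so that $(\lambda_1-\lambda_2)\zeta_1=G(y_2^+-y_1)\zeta_1$ after using $\zeta_1^2=0$), collapses both fermionic equations of \eqref{bt3} to identities; for example the $\widehat{\widetilde\eta}$-equation, with $\widehat{\widetilde\eta}-\eta=G\zeta_1$, reduces after clearing $2(\widehat{\widetilde w}-w)=2G$ to a relation among $G_x$, $G$ and $\zeta_{1,x}$ that the law above supplies. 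I expect the third, second-order equation for $\widehat{\widetilde w}$ to be the main obstacle: it calls for a further differentiation to produce $D_{xx}$, and the resulting terms must be reorganized—using \eqref{yj+} to trade $y_2^+$ for $y_1-\delta_2^+\zeta_1$, and the bilinears $(\widehat{\widetilde\xi}-\xi)(\widehat{\widetilde\eta}-\eta)=G^2\delta_2^+\zeta_1$ and $\xi\widehat{\widetilde\eta}-\widehat{\widetilde\xi}\eta=G(\xi\zeta_1-\delta_2^+\eta)$—to reproduce the inhomogeneous $(\lambda_2-\lambda_1)^2$ and fermionic pieces. This is careful but routine sign-bookkeeping rather than anything conceptual.

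Finally, for the temporal relation $\widehat{\widetilde\bPsi}_t=\widehat{\widetilde\bT}\widehat{\widetilde\bPsi}$, I would exploit that the seed data solve the \emph{complete} linear systems: $\bPsi_1$ satisfies the full \eqref{skdv_sp} at $\lambda_1$ and $\bPsi_2^+$ the full adjoint \eqref{skdv_sp3} at $\lambda_2$. Reading off the temporal parts yields $t$-evolution equations for $y_1,\zeta_1,y_2^+,\delta_2^+$, the time analogues of \eqref{y1x}; substituting these into the temporal component \eqref{Wt} of the B\"acklund system for $\bW_3$ verifies it by the same algebra as in the spatial case. Conceptually, the spatial B\"acklund transformation already makes $(\widehat{\widetilde u},\widehat{\widetilde\xi},\widehat{\widetilde\eta})$ a new solution of the gsKdV equation \eqref{gskdv} whenever $(u,\xi,\eta)$ is, so that both $(\bL,\bT)$ and $(\widehat{\widetilde\bL},\widehat{\widetilde\bT})$ satisfy the zero-curvature condition \eqref{zero}; differentiating the verified spatial intertwining \eqref{Wx} in $t$ then forces $\boldsymbol{\Omega}:=\bW_{3,t}+\bW_3\bT-\widehat{\widetilde\bT}\bW_3$ to obey the homogeneous law $\boldsymbol{\Omega}_x=\widehat{\widetilde\bL}\boldsymbol{\Omega}-\boldsymbol{\Omega}\bL$, whose only solution compatible with the polynomial-in-$\lambda$ structure of $\bW_3=\widehat{\bW}_1\bW_2$ and its degeneracy at $\lambda_1,\lambda_2$ is $\boldsymbol{\Omega}\equiv0$.
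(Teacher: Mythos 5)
Your proposal is correct and follows essentially the same route as the paper, which states Proposition~\ref{prop4} without a separate written proof precisely because it is the rewriting you describe: the spatial claims reduce, via the ``if and only if'' statements of Propositions~\ref{prp1}--\ref{prop3}, to checking that the substitutions \eqref{w1}--\eqref{w3} solve the B\"acklund systems \eqref{bt1}--\eqref{bt3}, which follows from the Riccati-type relations \eqref{y1x} together with nilpotency and anticommutation (your key identity $G_x=(y_1+y_2^+)G+G^2$ and the resulting cancellations, including in the second-order equation of \eqref{bt3}, all check out), while the temporal relation comes from the seed wave functions solving the full linear problems. The only loose point is the final uniqueness claim $\boldsymbol{\Omega}\equiv 0$ in your alternative zero-curvature argument, but your direct substitution argument for the $t$-part already suffices, so this does not affect correctness.
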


\begin{Remark}
Inserting \eqref{w3} into \eqref{bt3} yields
\begin{gather*}
\eta=\zeta_{1,x}+ \frac{{\bigl(G_x-G^2+\lambda_2-\lambda_1\bigr)\zeta_1 }}{2G},\qquad
\xi=\delta_{2,x}^+ + \frac{{\bigl(G_x-G^2+\lambda_1-\lambda_2\bigr)\delta_2^+}}{2G},\\
u=\frac{G_{xx}}{2G}-{ G_x}+\frac{G^2}{4}+\frac{(\lambda_2-\lambda_1)^2-G_x^2}{4G^2}-\frac{(\lambda_1+\lambda_2)}{2}\\
\phantom{u=}{}+\frac{1}{2}\bigl(\delta_2^+ \zeta_{1,x}-\delta_{2,x}^+\zeta_1\bigr)+\frac{(\lambda_2-\lambda_1)\delta_2^+\zeta_1} {G},
\end{gather*}
which can be viewed as an alternative form for the transformation of $u$, $\xi$, $\eta$.
\end{Remark}

Next, we consider the limit of the compound BDT and show that a fourth BDT may be obtained. Let $\lambda_2=\lambda_1+\epsilon$, $\psi_{2}=\psi_1(x,t;\lambda_1+\epsilon)$. Taking account of \eqref{yj+}, the limits of \eqref{bt3}, \eqref{dt3} and \eqref{w3} as $\epsilon \rightarrow 0$ give rise to the following proposition.

\begin{Proposition} \label{prop5}
 Let
\begin{align}
\label{w4}
& {\overline{w}} =w-\frac{1}{y_{1}'-\delta_{1}^+\zeta_{1}'},
\qquad {\overline{\eta}}=\eta-\frac{\zeta_1}{y_{1}'-\delta_{1}^+ \zeta_{1}'},
\qquad {\overline{\xi}}=\xi-\frac{\delta_1^+}{y_{1}'-\delta_{1}^+\zeta_{1}'},
\\\label{dt4}
&	 {\overline{\bPsi}}={\bW_4}\bPsi,
\qquad
{\bW_4}=\lim_{\epsilon\rightarrow 0}{\bW_3}|_{\lambda_2=\lambda_1+\epsilon, \
 \widehat{\widetilde{\xi}}\rightarrow {\overline{\xi}}, \
 \widehat{\widetilde{\eta}}\rightarrow {\overline{\eta}}, \
 \widehat{\widetilde{w}}\rightarrow {\overline{w}}
 },
\end{align}
where $\zeta_1$, $ \delta_1^+$, $y_1$ are given by \eqref{transj}, the superscript $'$ denotes the differentiation with respect to ${\lambda_1}$.
Then for any $\bPsi$ satisfying \eqref{skdv_sp}, ${\overline{\bPsi}}$ defined by \eqref {dt4} solves
$\overline{\bPsi}_x=\overline{\bL} \overline{\bPsi}$, $ \overline{\bPsi}_t=\overline{\bT} \overline{\bPsi}$. In particular, the compatibility of \eqref{skdv_sp1}, \eqref {dt4} and
 ${\overline\bPsi}_x={\overline\bL} {\overline\bPsi}$
 leads to
 \begin{subequations}\label{bt4}
\begin{eqnarray}
&&( {\overline{\eta}}-\eta)_x=
	\frac{1}{2}( {\overline{w}}-w)( {\overline{\eta}}+\eta)
	+\frac{( {\overline{w}}_x-w_x)( {\overline{\eta}}-\eta)}{2( {\overline{w}}-w)},\\
	&&( {\overline{\xi}}-\xi)_x=
	\frac{1}{2}( {\overline{w}}-w)( {\overline{\xi}}+\xi)
	+\frac{( {\overline{w}}_x-w_x)( {\overline{\xi}}-\xi)}{2( {\overline{w}}-w)},
	\\
	&&( {\overline{w}}-w)_{xx}=2(\lambda_1+ {\overline{w}}_x+w_x)( {\overline{w}}-w)-\frac{1}{2}( {\overline{w}}-w)^3
+\frac{( {\overline{w}}-w)^2_{x}}{2( {\overline{w}}-w)}
	+\xi {\overline{\eta}}- {\overline{\xi}}\eta.
\end{eqnarray}
\end{subequations}
\end{Proposition}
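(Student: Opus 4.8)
The plan is to obtain this BDT as the confluent (coalescence) limit of the two-parameter compound BDT of Proposition~\ref{prop3}, obtained by colliding the two spectral parameters. Following the prescription $\lambda_2=\lambda_1+\epsilon$, $\psi_2=\psi_1(x,t;\lambda_1+\epsilon)$, and (its natural completion) $\psi_2^+=\psi_1^+(x,t;\lambda_1+\epsilon)$, every ``$2$''-labelled quantity becomes the corresponding ``$1$''-labelled quantity evaluated at the shifted parameter; consequently $\frac{\mathrm{d}}{\mathrm{d}\epsilon}\big|_{\epsilon=0}$ acts as differentiation $'$ with respect to $\lambda_1$. I would then establish three things as $\epsilon\to0$: that the Darboux matrix $\bW_3$ of \eqref{dt3} converges entrywise to a finite matrix $\bW_4$, that the transformed potentials \eqref{w3} converge to \eqref{w4}, and that the two-parameter BT \eqref{bt3} converges to \eqref{bt4}. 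Since the intertwining relations already hold for all small $\epsilon\neq0$ by Propositions~\ref{prop3} and~\ref{prop4}, the statement then follows by continuity.

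The crux is the behaviour of $G=\frac{\lambda_1-\lambda_2}{y_2^+-y_1+\delta_2^+\zeta_1}=\frac{-\epsilon}{y_2^+-y_1+\delta_2^+\zeta_1}$. By the identity \eqref{yj+}, $y_1^+=y_1-\delta_1^+\zeta_1$, so the denominator vanishes at $\epsilon=0$, matching the vanishing numerator; thus $G$ is a genuine $0/0$ indeterminate form. Expanding the denominator to first order, with $y_2^+=y_1^++\epsilon(y_1^+)'+O(\epsilon^2)$, $\delta_2^+=\delta_1^++\epsilon(\delta_1^+)'+O(\epsilon^2)$ and $(y_1^+)'=y_1'-(\delta_1^+)'\zeta_1-\delta_1^+\zeta_1'$, the zeroth-order term cancels by \eqref{yj+} and the first-order coefficient collapses to $y_1'-\delta_1^+\zeta_1'$, so that $y_2^+-y_1+\delta_2^+\zeta_1=\epsilon(y_1'-\delta_1^+\zeta_1')+O(\epsilon^2)$ and $G\to-(y_1'-\delta_1^+\zeta_1')^{-1}$ (assuming the generic nonvanishing of this even denominator). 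Substituting this limit into \eqref{w3}, and using $\delta_2^+\to\delta_1^+$, reproduces exactly the formulae \eqref{w4} for $\overline{w}$, $\overline{\eta}$, $\overline{\xi}$. I would likewise check that the intermediate quantities $\widehat{\xi}$ and $\widehat{w}$ of \eqref{xi12w12} remain finite, their own $1/(\widehat{\widetilde{w}}-w)$ factors cancelling against the vanishing of $\lambda_2-\lambda_1$ and of $\widehat{\widetilde{\xi}}-\xi=G\delta_2^+$; this secures the existence of $\bW_4=\lim_{\epsilon\to0}\bW_3$ with finite entries.

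With entrywise convergence in hand, the proof concludes by passing to the limit. The intertwining identity $\bW_{3,x}+\bW_3\bL-\widehat{\widetilde{\bL}}\,\bW_3=0$ and its temporal analogue hold for every small $\epsilon\neq0$; letting $\epsilon\to0$ gives $\bW_{4,x}+\bW_4\bL-\overline{\bL}\,\bW_4=0$ and its $t$-counterpart, i.e.\ $\overline{\bPsi}_x=\overline{\bL}\,\overline{\bPsi}$ and $\overline{\bPsi}_t=\overline{\bT}\,\overline{\bPsi}$ for $\overline{\bPsi}=\bW_4\bPsi$. Taking $\epsilon\to0$ in \eqref{bt3} yields \eqref{bt4}: the terms $\lambda_1-\lambda_2=-\epsilon$ and $\lambda_2-\lambda_1=\epsilon$ drop from the first two equations, while in the third both $(\lambda_2-\lambda_1)^2=\epsilon^2$ and the fermionic cross term $\frac{(\lambda_2-\lambda_1)(\widehat{\widetilde{\xi}}-\xi)(\widehat{\widetilde{\eta}}-\eta)}{(\widehat{\widetilde{w}}-w)^2}$ vanish, the latter carrying an overall factor $\epsilon$ because $\widehat{\widetilde{\xi}}-\xi=G\delta_2^+$, $\widehat{\widetilde{\eta}}-\eta=G\zeta_1$ and $\widehat{\widetilde{w}}-w=G$ all have finite limits (indeed the $G^2$ cancels, leaving $\epsilon\,\delta_2^+\zeta_1$). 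The main obstacle is precisely controlling the $0/0$ limit of $G$---verifying that numerator and denominator vanish to the same first order in $\epsilon$, which rests on \eqref{yj+}---and confirming that the apparently singular $1/(\widehat{\widetilde{w}}-w)$ factors in \eqref{dt3} and \eqref{bt3} produce finite limits rather than blowing up.
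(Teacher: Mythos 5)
Your proposal is correct and follows essentially the same route as the paper: the paper also obtains Proposition~\ref{prop5} as the confluent limit $\lambda_2=\lambda_1+\epsilon$, $\psi_2=\psi_1(x,t;\lambda_1+\epsilon)$ of the compound BDT, using \eqref{yj+} to resolve the $0/0$ form of $G$ and passing to the limit in \eqref{bt3}, \eqref{dt3} and \eqref{w3}. Your expansion showing the denominator equals $\epsilon\bigl(y_1'-\delta_1^+\zeta_1'\bigr)+O\bigl(\epsilon^2\bigr)$ is precisely the detail the paper leaves implicit.
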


\begin{Remark}
 In the last two propositions, we may avoid the solutions of the adjoint spectral problem and reformulate the Darboux matrices and transformations for fields in terms of solutions of the spectral problem. To see it, introduce an auxiliary variable \smash{$\delta_j\equiv\frac{\partial^{-1}_x(\xi\psi_j)}{ \psi_j}$} such that~${\delta_j=\delta_j^+}$.
Thus $y_j^+$, $\delta_j^+$ may be replaced by
$
y_j^+=y_j-\delta_j \zeta_j$, $ \delta_j^+=\delta_j$,
and the formulas~\eqref{w2}--\eqref{w4} may be rewritten. Therefore, one may show that \eqref{w1}, \eqref{w2}, \eqref{w4} coincide with the results obtained in \cite{xl}.
\end{Remark}

\begin{Remark}
The BDT given by Proposition~\ref{prop5} survives under the reduction $\xi=\eta$. Indeed, taking $\zeta_1=\delta_1^+$, we have ${\overline{\eta}}= {\overline{\xi}}$.
Therefore from \eqref{w4}--\eqref{bt4}, we directly obtain the BDT for Kupershmidt's sKdV equation \eqref{kuper-skdv} \cite{xl}.
\end{Remark}

As mentioned in last section, system \eqref{gskdv} enjoys another reduction, namely the Geng--Wu's sKdV equation. It is interesting to observe that the BDT presented by Proposition~\ref{prop5} also survives under this reduction. Indeed, we take $\delta_1^+=\eta_x-y_1 \eta+\lambda_1 \zeta_1+\eta\eta_x \zeta_1$. Then, it is easy to check that $\xi=\eta_{xx}-u\eta$ implies ${\overline{\xi}}= {\overline{\eta}}_{xx}- {\overline{u}} {\overline{\eta}}$. In this way, we obtain a BDT for the Geng--Wu's sKdV equation from \eqref{w4}--\eqref{bt4}, and the result is summarized in the following proposition.

\begin{Proposition} \label{prop6}
Let $\xi=\eta_{xx}-u\eta$, and
\begin{align}\label{gw}
& {\overline{w}} =w+H,
\qquad {\overline{\eta}}=\eta+H{\zeta_1},\\
\label{dt5}
&	{\overline{\bPsi}}={\bW_5}\bPsi,
\qquad
{\bW_5}
=
\left(
\begin{matrix}
\lambda-\lambda_1+A_0 H
& H & -H A_1
\\
(\lambda-\lambda_{1})H+
	A_2 A_0 & \lambda-\lambda_{1}+A_2 & -\frac{1}{2}\bigl(H^2+H_x\bigr)A_1
\\
A_0 H{\zeta_1} & H{\zeta_1}
& \lambda- \lambda_1 +A_1H{\zeta_1}
\end{matrix}
\right),
\end{align}
where
\begin{gather*}
 H\equiv - [ y_{1}'+\zeta_{1}' (\eta_x-y_1 \eta+\lambda_1 \zeta_1+\eta\eta_x \zeta_1 )]^{-1},
\\
A_1\equiv\eta_x+\frac{H \eta}{2}
-\frac{H_x\eta}{2H}
+{\lambda_1 {\zeta_1} }
-\frac{\eta_x \eta {\zeta_1}}{2},
\\
 A_0\equiv \frac{1}{2H}\bigl(H^2-H_{x}+H A_1 {\zeta_1} \bigr),
\qquad
 A_2\equiv \frac{1}{2}\bigl(H^2
+{H_{x}}+H A_1 {\zeta_1} \bigr)
,
\end{gather*}
and $\zeta_1$, $y_1$ are given by \eqref{transj}. Then any ${\overline{\bPsi}}$ defined by
\eqref{dt5}, together with \eqref{geng0} and \eqref{skdv_sp},
 solves $\overline{\bPsi}_x=\overline{\bL} \overline{\bPsi}$, $ \overline{\bPsi}_t=\overline{\bT} \overline{\bPsi}$. The corresponding BT, after eliminating $H$ and $\zeta_1$ by \eqref{gw}, becomes
\begin{gather*}
	( {\overline \eta}-\eta)_x =\frac{1}{2}( {\overline w}-w)( {\overline \eta}+\eta)+\frac{( {\overline w}-w)_x}{2( {\overline w}-w)}( {\overline \eta}-\eta),\\
	( {\overline w}-w)_{xx}=2( {\overline w}-w)(\lambda_1+ {\overline w}_x+w_x)
	-\frac{1}{2}( {\overline w}-w)^3
	+\frac{( {\overline w}-w)_x^2}{2( {\overline w}-w)}\\
\phantom{( {\overline w}-w)_{xx}=}{}
	+ \eta_{xx}( {\overline \eta}-\eta)+\eta\eta_x ( {\overline w}-w) +\eta {\overline \eta} (\lambda_1-2w_x).
\end{gather*}
\end{Proposition}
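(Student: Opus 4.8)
The plan is to obtain Proposition~\ref{prop6} as a reduction of Proposition~\ref{prop5}, exploiting the fact that the construction there uses $\delta_1^+$ only through the first-order equation it satisfies in \eqref{y1x}, so any particular solution of that equation is admissible. Consequently the intertwining statements $\overline{\bPsi}_x=\overline{\bL}\,\overline{\bPsi}$ and $\overline{\bPsi}_t=\overline{\bT}\,\overline{\bPsi}$ will be inherited verbatim once I fix an admissible $\delta_1^+$ compatible with the Geng--Wu constraint $\xi=\eta_{xx}-u\eta$. The genuine content to be verified is then only that the constraint propagates, i.e.\ $\overline{\xi}=\overline{\eta}_{xx}-\overline{u}\,\overline{\eta}$, together with the rewriting of $\bW_4$ from \eqref{dt4} and of \eqref{bt4} into the reduced forms \eqref{dt5} and the two displayed BT equations.

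First I would check that the explicit choice $\delta_1^+=\eta_x-y_1\eta+\lambda_1\zeta_1+\eta\eta_x\zeta_1$ is admissible. Using $y_1^+=y_1-\delta_1^+\zeta_1$ from \eqref{yj+} and the nilpotency $\delta_1^+\zeta_1\delta_1^+=0$, the defining equation for $\delta_1^+$ in \eqref{y1x} collapses to $\delta_{1,x}^+=\xi-y_1\delta_1^+$. Substituting the proposed expression and eliminating $y_{1,x}$, $\zeta_{1,x}$ via \eqref{y1x}, every $\lambda_1$- and $y_1$-dependent term cancels, and with $\xi=\eta_{xx}-u\eta$ the surviving fermionic terms vanish by the Grassmann identities $\eta^2=\zeta_1^2=0$ and $\eta\eta_x\eta=0$. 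This is a short calculation.

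The main obstacle is the propagation of the constraint. Writing $H=\overline{w}-w$, equations \eqref{w4} give $\overline{\eta}-\eta=H\zeta_1$, $\overline{\xi}-\xi=H\delta_1^+$ and $\overline{u}-u=2H_x$, so that $\overline{\eta}_{xx}-\overline{u}\,\overline{\eta}-\overline{\xi}$ becomes an expression in $H$, $H_x$, $H_{xx}$, $\zeta_1$, $\eta$, $y_1$, $\lambda_1$. The derivatives of $H$ are supplied by differentiating $H=-\bigl(y_1'-\delta_1^+\zeta_1'\bigr)^{-1}$: using the $\lambda_1$-derivatives of \eqref{y1x} and again $\delta_1^+\zeta_1\delta_1^+=0$, a direct computation yields the compact relation $H_x=2y_1H+H^2-H\delta_1^+\zeta_1$, and one more differentiation gives $H_{xx}$. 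Feeding these into $\overline{\eta}_{xx}-\overline{u}\,\overline{\eta}-\overline{\xi}$, all bosonic and bilinear contributions cancel, leaving a single term proportional to $\zeta_1$; inserting the explicit $\delta_1^+$ and using $\eta_x\eta=-\eta\eta_x$ with $\zeta_1^2=0$, this reduces to $H\eta\eta_x\zeta_1-H\eta\eta_x\zeta_1=0$. This confirms $\overline{\xi}=\overline{\eta}_{xx}-\overline{u}\,\overline{\eta}$, so the transformed fields again satisfy the Geng--Wu reduction \eqref{geng0}.

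It remains to put the Darboux matrix and the BT in reduced form. Substituting the chosen $\delta_1^+$ into $\bW_4$ and collecting terms, the intermediate data of \eqref{xi12w12} become $\widehat{w}-w=-y_1^+=A_0$ and $\widehat{\xi}=\delta_1^+=A_1$, the latter being precisely the rewriting of $\delta_1^+$ in terms of $H$ and $H_x$ through the relation $H_x=2y_1H+H^2-H\delta_1^+\zeta_1$; the combination $A_2=H^2+y_1H$ then organizes the $(2,2)$ entry. This yields $\bW_5$ as displayed in \eqref{dt5}. Finally, setting $\overline{w}-w=H$ and $\overline{\eta}-\eta=H\zeta_1$ in \eqref{bt4}, discarding the $\overline{\xi}$-equation (now redundant by the constraint), and re-expressing $\xi\overline{\eta}-\overline{\xi}\eta$ through $\xi=\eta_{xx}-u\eta$ and $\overline{\xi}=\overline{\eta}_{xx}-\overline{u}\,\overline{\eta}$ produces the two stated BT equations. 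The temporal part $\overline{\bPsi}_t=\overline{\bT}\,\overline{\bPsi}$ carries over unchanged from Proposition~\ref{prop5}.
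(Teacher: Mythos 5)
Your proposal is correct and takes essentially the same route as the paper: the paper also obtains Proposition~\ref{prop6} by specializing Proposition~\ref{prop5} with the choice $\delta_1^+=\eta_x-y_1\eta+\lambda_1\zeta_1+\eta\eta_x\zeta_1$, checking that the constraint $\xi=\eta_{xx}-u\eta$ propagates to ${\overline{\xi}}={\overline{\eta}}_{xx}-{\overline{u}}\,{\overline{\eta}}$, and then rewriting \eqref{dt4} and \eqref{bt4} in the reduced variables. Your supporting identities (admissibility of $\delta_1^+$ via nilpotency, $H_x=2y_1H+H^2-H\delta_1^+\zeta_1$, $A_1=\delta_1^+$, $A_0=-y_1^+$, $A_2=H^2+y_1H$) all check out and simply make explicit the computation the paper leaves as ``easy to check.''
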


Above BDT is given in terms of variables $w$, $\eta$ and it is possible to go over to variables
$v, \eta$. In fact, by means of $v=2w_x+\eta \eta_x$, ${\overline{v}}=2 {\overline{w}}_x+ {\overline{\eta}} {\overline{\eta}}_x$, from Proposition~\ref{prop6} we obtain
\[
 {\overline{v}}=v+2H_x+H\zeta_1\eta_x
+\frac{1}{2}\zeta_1\eta\bigl(H^2-H_x \bigr),
\]
which should be coupled with the second equation of \eqref{gw}.

\section[B\"acklund and Darboux transformations for gsmKdV equations]{B\"acklund and Darboux transformations\\ for gsmKdV equations}\label{sec5}
In this section, we aim to construct BDTs for gsmKdV equation \eqref{gsmkdv} and consider the relevant reductions.


Let $\bPhi$ be the solution of \eqref{smkdv_spN}. Similarly,
consider a gauge transformation
$
\widetilde{\bPhi}=\bV\bPhi$,
such that
\smash{$
\widetilde{\bPhi}_x=\widetilde{\bU}\widetilde{\bPhi}$}, \smash{$ \widetilde{\bPhi}_{t_N}=\widetilde{\bM}_N\widetilde{\bPhi}$},
where $\widetilde{\bU}$ and $\widetilde{\bM}_N$ are $\bU$ and $\bM_N$ but with $q$, $\zeta$ and $\delta$ replaced by $\widetilde{q}$, $\widetilde{\zeta}$ and $\widetilde{\delta}$, respectively.
The compatibility conditions $\widetilde{\bPsi}_x=\bigl(\widetilde{\bPsi}\bigr)_x$ and $\widetilde{\bPsi}_{t_N}=\bigl(\widetilde{\bPsi}\bigr)_{t_N}$ lead to
\begin{subequations}\label{Vxt}
\begin{align}
\label{Vx}
&\bV_x+\bV \bU- \widetilde{\bU}\bV=0,\\
 &\bV_{t_N}+\bV \bM_N- \widetilde{\bM}_N\bV=0,\label{Vt}
\end{align}
\end{subequations}
respectively.
Therefore, the gauge transformation $\widetilde{\bPhi} = \bV\bPhi$, together with condition \eqref{Vx},
 provides the same Darboux matrix and spatial part of BT for $t_N$-flow of the gsmKdV hierarchy, and from \eqref{Vt} one may obtain the related temporal part of BT. Again in the following we consider $N=3$ case.

We first note that there exists the following gauge transformation for spectral problem \eqref{smkdv_sp1}.
\begin{Proposition} \label{prop7}
 Let
\begin{align}
\widetilde{\bPhi}={\bV}_0\bPhi,
\qquad
{\bV}_0= \left(
\begin{matrix} 0 & 1 & \delta
\\
\lambda & 0 &0
\\
0 &-\widetilde{\zeta} &1+\delta\widetilde{\zeta}
\end{matrix}
\right). \label{dt1_smkdv}
\end{align}
Then \eqref{smkdv_sp1}, \eqref{dt1_smkdv}, and $\widetilde{\bPhi}_{x}=\widetilde{\bU}\widetilde{\bPhi}$ are compatible if and only if
\begin{equation}\label{bt1_smkdv}
\widetilde{y}=-y +\widetilde{\delta} \widetilde{\zeta},\qquad \widetilde{\zeta}_x=-\zeta-\widetilde{y} \widetilde{\zeta},\qquad \widetilde{\delta}=\delta_x+y \delta
\end{equation}
are satisfied.
\end{Proposition}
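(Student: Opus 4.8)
The plan is to reduce the compatibility assertion to a single Darboux matrix identity and then verify that identity componentwise. By the general compatibility relation \eqref{Vx} specialised to $\bV=\bV_0$, the requirement that $\widetilde{\bPhi}=\bV_0\bPhi$ intertwine $\bPhi_x=\bU\bPhi$ with $\widetilde{\bPhi}_x=\widetilde{\bU}\widetilde{\bPhi}$ is equivalent to the matrix ODE $\bV_{0,x}+\bV_0\bU-\widetilde{\bU}\bV_0=0$. Hence it suffices to show that this $3\times3$ super-matrix equation holds if and only if \eqref{bt1_smkdv} does. Throughout I would pass to the variables $y,\zeta,\delta$ via $y=q+\tfrac12\delta\zeta$ (and likewise $\widetilde{y}=\widetilde{q}+\tfrac12\widetilde{\delta}\widetilde{\zeta}$), which renders the entries of $\bU$, $\widetilde{\bU}$ and $\bV_0$ uniform and lets me exploit the nilpotency $\delta^2=\widetilde{\zeta}^2=0$ and the anticommutation rules freely.

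Next I would compute the three terms $\bV_{0,x}$, $\bV_0\bU$ and $\widetilde{\bU}\bV_0$ entry by entry, carefully tracking the signs produced by reordering fermionic factors. The key observations I expect are: the $(1,1)$, $(2,2)$, $(2,3)$ and $(3,1)$ entries collapse to trivial identities such as $\lambda-\lambda=0$; the $(1,2)$ entry yields exactly $\widetilde{y}=-y+\widetilde{\delta}\widetilde{\zeta}$; the $(3,2)$ entry yields $\widetilde{\zeta}_x=y\widetilde{\zeta}-\zeta$; and the $(1,3)$ entry yields $\widetilde{\delta}=\delta_x-\widetilde{y}\delta-\widetilde{\delta}\delta\widetilde{\zeta}$. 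Substituting the first relation $\widetilde{y}=-y+\widetilde{\delta}\widetilde{\zeta}$ and using the cancellation $\widetilde{\delta}\widetilde{\zeta}\delta=-\widetilde{\delta}\delta\widetilde{\zeta}$, the $(1,3)$ and $(3,2)$ equations simplify to the stated forms $\widetilde{\delta}=\delta_x+y\delta$ and $\widetilde{\zeta}_x=-\zeta-\widetilde{y}\widetilde{\zeta}$ (the latter because $\widetilde{\delta}\widetilde{\zeta}\widetilde{\zeta}=0$), thereby recovering \eqref{bt1_smkdv}.

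It remains to dispose of the two seemingly extra entries and to run the argument in reverse. I would show that the $(2,1)$ entry equals $\lambda\bigl(y+\widetilde{y}-\widetilde{\delta}\widetilde{\zeta}\bigr)$ and the $(3,3)$ entry equals $\delta_x\widetilde{\zeta}+\delta\widetilde{\zeta}_x+\delta\zeta-\widetilde{\delta}\widetilde{\zeta}$, each of which vanishes as a consequence of the three relations already obtained; this completes the ``only if'' direction, and read backwards, substituting \eqref{bt1_smkdv} into all nine entries confirms the ``if'' direction. The main obstacle is purely the Grassmann bookkeeping: keeping the anticommutation signs correct and invoking nilpotency at the right places, in particular verifying that the cubic fermionic term $\widetilde{\delta}\delta\widetilde{\zeta}$ appearing in the $(1,3)$ entry cancels once the linear relation $\widetilde{y}=-y+\widetilde{\delta}\widetilde{\zeta}$ is inserted, so that the overdetermined-looking system is in fact exactly equivalent to the three transformation rules of \eqref{bt1_smkdv}.
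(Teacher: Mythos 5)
Your proposal is correct and follows exactly the route the paper intends: the compatibility assertion is, by the general relation \eqref{Vx}, the matrix identity $\bV_{0,x}+\bV_0\bU-\widetilde{\bU}\bV_0=0$, and your entry-by-entry verification (with the $(1,2)$, $(3,2)$, $(1,3)$ entries producing the three relations of \eqref{bt1_smkdv}, the $(2,1)$ and $(3,3)$ entries being consequences, and the rest trivial) is accurate, including the Grassmann sign bookkeeping and the nilpotency cancellations $\widetilde{\delta}\widetilde{\zeta}\delta=-\widetilde{\delta}\delta\widetilde{\zeta}$ and $\widetilde{\delta}\widetilde{\zeta}\widetilde{\zeta}=0$.
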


It is pointed out that there is no free parameter in \eqref{bt1_smkdv} so that the gauge transformation does not qualify a DT. However, it is interesting to note that from \eqref{bt1_smkdv}, the Miura transformations
 \[
 u=2w_x=y_x+y^2 -\xi \zeta, \qquad \eta=\zeta_{x}+y \zeta,\qquad \xi=\delta_{x}+y \delta,
 \]
and
\[
\widetilde{u}=2\widetilde{w}_x=\widetilde{y}_x+\widetilde{y}^2
-\widetilde{\xi }\widetilde{\zeta},
\qquad
\widetilde{\eta}=\widetilde{\zeta}_{x}+\widetilde{y}\widetilde{ \zeta},\qquad
\widetilde{\xi}=\widetilde{\delta}_{x}+\widetilde{y}\widetilde{ \delta},
\]
we have
 \[
\widetilde{w}=w- y,\qquad u=y_x+y^2 +\xi \widetilde{\eta},\qquad \widetilde{\eta}_x=- \eta -y \widetilde{\eta},\qquad
\xi_x= \widetilde{\xi}+y \xi,
 \]
which is exactly \eqref{bt1} if $\lambda_1=0$.

We now present two elementary BDTs for the gsmKdV equation. In the remaining part, for convenience, we introduce $k_j$ by $\lambda_j \equiv k_j^2$ ($j=1, 2$), and let
$q=r_x$, $ q_{[j]}=r_{[j],x}$, $ j=1, 2, 3, 4$.
Our results are summarized as follows.

\begin{Proposition} \label{prop8}
 Let
\begin{align}
{\bPhi}_{[1]}={\bV}_1\bPhi,
\qquad
{\bV}_1= \left(
\begin{matrix}
 \lambda & k_1 p_1+{p_1^2}\Delta_1& 0
\\
 \lambda k_1 p_1^{-1} +\lambda \Delta_1& \lambda &\lambda (\delta-\delta_{[1]})
\\
\lambda (\zeta-\zeta_{[1]}) & 0 &\lambda
\end{matrix}
\right), \label{dt3_smkdv}
\end{align}
where $ p_1={\rm e}^{r_{[1]}+r}$, $ \Delta_1=\frac{1}{2} (\delta_{[1]}-\delta)(\zeta_{[1]}-\zeta)$.
Then \eqref{smkdv_sp1}, \eqref{dt3_smkdv}, ${\bPhi}_{[1],x}={\bU}_{[1]}{\bPhi}_{[1]}$ are compatible if and only if
\begin{subequations}\label{bt3_smkdv}
\begin{align}
(r_{[1]}-r)_x=&2 k_1 \sinh(r_{[1]}+r) +\frac{1}{2} (\delta \zeta_{[1]}-\delta_{[1]}\zeta )+\frac{1}{2}{\rm e}^{2 r_{[1]}+2r} (\delta_{[1]}-\delta )(\zeta_{[1]}-\zeta),\\
(\zeta_{[1]}-\zeta)_x=& k_1 {\rm e}^{-r_{[1]}-r} \zeta _{[1]}-r_{x} (\zeta_{[1]}-\zeta)+\frac{1}{2} \delta_{[1]} \zeta\zeta_{[1]},\\
(\delta_{[1]}-\delta)_x=& k_1 {\rm e}^{-r_{[1]}-r} \delta -r_{[1],x} (\delta_{[1]}-\delta)+\frac{1}{2} \delta \delta_{[1]} \zeta
\end{align}
\end{subequations}
is satisfied. 
\end{Proposition}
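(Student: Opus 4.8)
The plan is to apply the general compatibility criterion \eqref{Vx} to the specific Darboux matrix $\bV_1$ of \eqref{dt3_smkdv}. Here the role of the tilde is played by the subscript $[1]$, so imposing $\bPhi_{[1]}=\bV_1\bPhi$ together with $\bPhi_x=\bU\bPhi$ and ${\bPhi}_{[1],x}={\bU}_{[1]}{\bPhi}_{[1]}$ is equivalent to the single matrix identity
\begin{equation*}
\bV_{1,x} + \bV_1\bU - \bU_{[1]}\bV_1 = 0,
\end{equation*}
where $\bU_{[1]}$ is $\bU$ of \eqref{smkdv_sp1} with $q,\zeta,\delta$ replaced by $q_{[1]},\zeta_{[1]},\delta_{[1]}$. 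Because only the spatial part $\bU$ enters, I need not touch the temporal matrix $\bM$ at this stage; everything reduces to checking that the left-hand $3\times3$ matrix vanishes precisely when \eqref{bt3_smkdv} holds.

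To organize the computation I would exploit that both $\bU$ and $\bV_1$ are affine in the spectral parameter. Write $\bU=\lambda\bE+\bF$, where $\bE$ carries the single entry $1$ in position $(2,1)$ and $\bF$ collects the field-dependent terms, and $\bV_1=\lambda\bA+\bB$, where $\bB$ has its only nonzero entry $k_1 p_1+p_1^2\Delta_1$ in position $(1,2)$. Substituting and collecting powers of $\lambda$ turns the identity into the coefficients of $\lambda^2$, $\lambda^1$ and $\lambda^0$:
\begin{gather*}
\bA\bE - \bE\bA = 0,\\
\bA_x + \bA\bF + \bB\bE - \bE\bB - \bF_{[1]}\bA = 0,\\
\bB_x + \bB\bF - \bF_{[1]}\bB = 0.
\end{gather*}
A direct check shows the $\lambda^2$ relation holds identically (both $\bA\bE$ and $\bE\bA$ equal $\bE$), which is the structural consistency that makes $\bV_1$ admissible. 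It therefore remains to analyze the $\lambda^1$ and $\lambda^0$ coefficients.

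The substantive content sits in a handful of entries. Using $q=r_x$, $q_{[1]}=r_{[1],x}$, $p_{1,x}=(r+r_{[1]})_x\,p_1$ and $2k_1\sinh(r+r_{[1]})=k_1(p_1-p_1^{-1})$, the $(1,1)$ entry of the $\lambda^1$ equation collapses, after the fermionic rearrangement $\frac{1}{2}(\delta\zeta+\delta_{[1]}\zeta_{[1]})-\delta_{[1]}\zeta-\Delta_1=\frac{1}{2}(\delta\zeta_{[1]}-\delta_{[1]}\zeta)$, to the bosonic equation in \eqref{bt3_smkdv}; the $\sinh$ arises exactly because $p_1$ enters through $\bB$ while $p_1^{-1}$ enters through $\bA$. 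The $(3,1)$ entry, which carries $(\zeta-\zeta_{[1]})_x$ from $\bA_x$, yields the $\zeta$-equation, and the $(2,3)$ entry, carrying $(\delta-\delta_{[1]})_x$, yields the $\delta$-equation. The remaining entries of the $\lambda^1$ equation and the single nontrivial $(1,2)$ entry of the $\lambda^0$ equation must then be shown to be identically satisfied once \eqref{bt3_smkdv} is imposed, which closes the ``if and only if''.

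I expect the main obstacle to be the Grassmann bookkeeping rather than any conceptual difficulty. One must track anticommutation signs carefully and repeatedly invoke nilpotency---$\zeta^2=\delta^2=0$, the vanishing of monomials such as $\zeta\delta\zeta$, and the collapse of degree-four products through the evenness of $\delta_{[1]}\zeta_{[1]}$---to see, for instance, that $\Delta_1(\delta\zeta-\delta_{[1]}\zeta_{[1]})=0$ and that the $\lambda^0$ equation reduces to a consequence of the bosonic flow equation. Verifying that precisely the three displayed relations exhaust the matrix identity, with every other entry redundant, is the step that demands the most care.
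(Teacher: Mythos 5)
Your proposal is correct and takes essentially the same route as the paper: the paper establishes this proposition (stated without a separate written-out proof) precisely via the spatial compatibility condition \eqref{Vx}, i.e., $\bV_{1,x}+\bV_1\bU-\bU_{[1]}\bV_1=0$, which you expand in powers of $\lambda$ and verify entry by entry. Your identification of the decisive entries --- $(1,1)$ for the bosonic equation, $(3,1)$ and $(2,3)$ for the two fermionic equations --- together with the Grassmann identities (such as $\Delta_1(\delta\zeta-\delta_{[1]}\zeta_{[1]})=0$) needed to show the remaining entries are redundant, matches the direct verification underlying the paper's claim.
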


\begin{Proposition} \label{prop9}
 Let
\begin{align}
{\bPhi}_{[2]}={\bV}_2\bPhi,
\qquad
{\bV}_2= \left(
\begin{matrix}
 \lambda &- k_2 p_2-{p_2^2}\Delta _2& k_2 p_2 (\delta_{[2]}-\delta)
\\
- \lambda k_2 p_2^{-1}+\lambda \Delta_2& \lambda &\lambda (\delta-\delta_{[2]})
\\
\lambda (\zeta-\zeta_{[2]}) & k_2 p_2 (\zeta_{[2]} -\zeta) &\lambda- k_2^2 +2 k_2 p_2 \Delta_2
\end{matrix}
\right), \label{dt4_smkdv}
\end{align}
where $p_2={\rm e}^{r_{[2]}+r}$, $ \Delta_2=\frac{1}{2} (\delta_{[2]}-\delta)(\zeta_{[2]}-\zeta)$.
Then \eqref{smkdv_sp1}, \eqref{dt4_smkdv}, ${\bPhi}_{[2],x}={\bU}_{[2]}{\bPhi}_{[2]}$ are compatible if and only if
\begin{subequations}\label{bt4_smkdv}
\begin{align}
(r_{[2]}-r)_x=&-2 k_2 \sinh(r_{[2]}+r) +\frac{1}{2} ( \delta \zeta_{[2]}-\delta_{[2]}\zeta)-\frac{1}{2}{\rm e}^{2 r_{[2]}+2r} (\delta_{[2]}-\delta )(\zeta_{[2]}-\zeta),\\
(\zeta_{[2]}-\zeta)_x=&- k_2 {\rm e}^{-r_{[2]}-r} \zeta-r_{[2],x} (\zeta_{[2]}-\zeta)-\frac{1}{2} \delta \zeta_{[2]}\zeta,\\
(\delta_{[2]}-\delta)_x=&- k_2 {\rm e}^{-r_{[2]}-r} \delta_{[2]} -r_{x} (\delta_{[2]}-\delta)-\frac{1}{2} \delta_{[2]}\delta\zeta_{[2]}
\end{align}
\end{subequations}
is satisfied. 
\end{Proposition}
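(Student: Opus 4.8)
The plan is to establish the spatial compatibility directly, following the derivation of \eqref{Vx}. The transformation $\bPhi_{[2]}=\bV_2\bPhi$, combined with $\bPhi_x=\bU\bPhi$ and $\bPhi_{[2],x}=\bU_{[2]}\bPhi_{[2]}$ (here $\bU_{[2]}$ is the matrix $\bU$ of \eqref{smkdv_sp1} with $(q,\zeta,\delta)$ replaced by $(q_{[2]},\zeta_{[2]},\delta_{[2]})$), is consistent if and only if
\[
\bV_{2,x}+\bV_2\bU-\bU_{[2]}\bV_2=0 .
\]
Thus the entire statement reduces to proving that this single matrix identity is equivalent to the system \eqref{bt4_smkdv}.

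Every entry of $\bV_2$ in \eqref{dt4_smkdv} is at most linear in $\lambda$, so I would write $\bV_2=\lambda\bV_2^{(1)}+\bV_2^{(0)}$ and, using $\bU=\lambda\bU^{(1)}+\bU^{(0)}$ with $\bU^{(1)}$ carrying the sole spectral entry in position $(2,1)$, collect the identity by powers of $\lambda$. The products $\bV_2\bU$ and $\bU_{[2]}\bV_2$ are quadratic in $\lambda$ while $\bV_{2,x}$ is linear, so the identity splits at orders $\lambda^2$, $\lambda^1$, $\lambda^0$. The $\lambda^2$ coefficient is $\bV_2^{(1)}\bU^{(1)}-\bU_{[2]}^{(1)}\bV_2^{(1)}$; since $\bU^{(1)}=\bU_{[2]}^{(1)}=E_{21}$ and one checks directly that $\bV_2^{(1)}E_{21}=E_{21}\bV_2^{(1)}=E_{21}$, this commutator vanishes identically and imposes no condition. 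All the content therefore lives at orders $\lambda^1$ and $\lambda^0$.

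To handle these, I would substitute $q=r_x$, $q_{[2]}=r_{[2],x}$ and $p_2={\rm e}^{r_{[2]}+r}$ (so $p_{2,x}=(r_{[2],x}+r_x)p_2$), compute $\Delta_{2,x}$ from $\Delta_2=\tfrac12(\delta_{[2]}-\delta)(\zeta_{[2]}-\zeta)$, and read off the entry-by-entry equations, simplifying throughout with Grassmann commutativity and the nilpotency relations $\delta^2=\zeta^2=0$. Writing $-2k_2\sinh(r_{[2]}+r)=-k_2p_2+k_2p_2^{-1}$, the bosonic (diagonal) entries are expected to yield the first relation of \eqref{bt4_smkdv}, while the fermionic off-diagonal entries yield the two relations for $(\zeta_{[2]}-\zeta)_x$ and $(\delta_{[2]}-\delta)_x$; the remaining entries should then be either automatically satisfied or algebraic consequences of these three, which gives both directions of the equivalence.

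I expect the main obstacle to be bookkeeping rather than concept. The even combinations $\Delta_2$, $p_2^2\Delta_2$ and the mixed terms $k_2p_2(\delta_{[2]}-\delta)$ generate many contributions, and the order-$\lambda^0$ verification is genuinely coupled: evaluating $\Delta_{2,x}$ requires $(\delta_{[2]}-\delta)_x$ and $(\zeta_{[2]}-\zeta)_x$, which are exactly what the transformation prescribes, so the B\"acklund relations must be substituted back to close the identity. Careful sign tracking from Grassmann commutativity, together with the systematic discarding of products of three or more odd factors, is the error-prone part. Otherwise the computation parallels the proof of Proposition~\ref{prop8}, the two differing only in the sign of the $k_2$-terms and in the extra entries occupying the third row and column of $\bV_2$.
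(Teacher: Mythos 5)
Your proposal is correct and follows essentially the same route as the paper: the statement is reduced to the spatial compatibility condition \eqref{Vx}, i.e., $\bV_{2,x}+\bV_2\bU-\bU_{[2]}\bV_2=0$, which is then verified by direct (if tedious) computation to be equivalent to \eqref{bt4_smkdv}, with the $\lambda$-expansion and the observation $\bV_2^{(1)}\bU^{(1)}=\bU_{[2]}^{(1)}\bV_2^{(1)}$ correctly disposing of the top-order terms. The paper offers no further argument for this proposition beyond that framework, so your outline, including the need to substitute the B\"acklund relations back when closing the order-$\lambda^0$ identities, matches its intended proof.
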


It is noted that the BT \eqref{bt3} may be related to \eqref{bt3_smkdv} (or \eqref{bt4_smkdv}). In fact,
 from \eqref{bt3_smkdv}, the Miura transformation \eqref{miura2} and its deformation
\begin{gather*}
\widehat{\widetilde{u}}=2\widehat{\widetilde{w}}_x=q_{[1],x}+{q}_{[1]}^2
-\frac{1}{2}(\delta_{[1],x}\zeta_{[1]}-\delta _{[1]}\zeta_{[1],x})
,\qquad
\widehat{\widetilde{\eta}}={\zeta}_{[1],x}+{q}_{[1]} { \zeta}_{[1]} ,\\
\widehat{\widetilde{\xi}}={\delta}_{[1],x}+{q}_{[1]} { \delta}_{[1]},
\end{gather*}
 imply{\samepage
 \begin{gather*}
\widehat{\widetilde{w}}=w+ k_1 {\rm e}^{r_{[1]}+r}+\frac{1}{2} {\rm e}^{2 r_{[1]}+2r} (\delta_{[1]}-\delta )(\zeta_{[1]}-\zeta),\qquad
\widehat{\widetilde{\eta}}=\eta+(\widehat{\widetilde{w}}-w) \zeta_{[1]},\\
 \widehat{\widetilde{\xi}}=\xi+(\widehat{\widetilde{w}}-w) \delta,
 \end{gather*}
 then \eqref{bt3} is satisfied when $ \lambda_2=0$.}

To formulate the BDTs given above in terms of the particular solutions of spectral problem~\eqref{smkdv_sp} and adjoint problem
\begin{equation}\label{smkdv_sp3}
\bPhi_{x}^+=-\bPhi^+{\bU},\qquad \bPhi_{t}^+=-\bPhi^+{\bM},
\end{equation}
we
assume that $\bPhi_j=(\phi_{1j},\phi_{2j},\phi_{3j})^{\mathsf{T}}$ is a solution of \eqref{smkdv_sp} at $\lambda=\lambda_j$, $\bPhi_j^+=\bigl(\phi_{1j}^+,\phi_{2j}^+,\phi_{3j}^+\bigr)$ is a solution of \eqref{smkdv_sp3} at $\lambda=\lambda_j$,
where $\bPhi_j\equiv \bPhi_j(x, t; \lambda_j)$, \smash{$\bPhi_j^+\equiv \bPhi_j^+(x, t;\lambda_j)$}; $\phi_{3j}$ and \smash{$\phi_{3j}^+$} are fermionic, and other components are bosonic, for $j=1, 2$. Introduce new variables $b_j$, $\beta_j$, \smash{$b^+_j$}, \smash{$\beta_j^+$} such that
\begin{align}\label{b12}
b_j \equiv -\frac{\phi_{1j}}{\phi_{2j}}, \qquad \beta_j\equiv -\frac{\phi_{3j}}{\phi_{2j}},\qquad
b^+_j\equiv \frac{\phi_{2j}^+}{\phi_{1j}^+}, \qquad \beta^+_j\equiv -\frac{\phi_{3j}^+}{\phi_{1j}^+}.
\end{align}
Those quantities satisfy the following differential equations:
\begin{subequations}\label{b1x}
\begin{align}
& b_{j,x} = \lambda_j b_j^2-1+2 b_j q - \beta_j \delta, \qquad b_{j,x}^+= \lambda_j \bigl(b_j^+\bigr)^2 -1+2 b_j^+ q- \beta_j^+ \zeta,
\\
\label{beta1x}
&\beta_{j,x}=\zeta+\beta_j (q+ \lambda_j b_j)+\frac{1}{2}\beta_j \delta \zeta,\qquad
\beta_{j,x}^+=\delta +\beta_j^+ \bigl(q+ \lambda_j b_j^+\bigr)-\frac{1}{2}\beta_j^+ \delta \zeta,
\end{align}
\end{subequations}
which indicate that we may take $b_j^+ =b_j- \beta_j^+ \beta_j$.

With those preparations, we may rewrite the BDTs given by last propositions and obtain the following result.
\begin{Proposition} \label{prop10}
Let
\begin{align*}
&{\rm e}^{r_{[1]}+r}={ k_1 } \left(b_1-\frac{1}{2} \alpha_1 \beta_1 \right),\qquad
\zeta_{[1]}=\zeta+\frac{ \beta_1}{b_1},\qquad \delta_{[1]}=\delta+\frac{\alpha_1}{b_1};
\\
&{\rm e}^{r_{[2]}+r}={ k_2 } \left(\frac{ 1}{2}\alpha_2^+\beta_2^+-b_2^+\right),\qquad \delta_{[2]}=\delta+\frac{ \beta_2^+}{b_2^+}, \qquad \zeta_{[2]}=\zeta+\frac{ \alpha_2^+}{b_2^+},
\end{align*}
where $b_j$, $\beta_j$, $b_j^+$, $\beta_j^+$ are defined by \eqref{b12}, the auxiliary variables $\alpha_1$, $\alpha_2^+$ satisfy the following first-order differential equations
$
 \alpha_{1,x}=\delta +q \alpha_1-\frac{1}{2}\alpha_1\delta \zeta$, $
 \alpha_{2,x}^+=\zeta +q \alpha_2^++\frac{1}{2}\alpha_2^+ \delta \zeta$,
respectively.
Then \eqref{bt3_smkdv} and \eqref{bt4_smkdv} are satisfied.
\end{Proposition}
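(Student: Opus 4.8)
The plan is to verify the stated substitutions by direct computation, showing that the variables defined through the spectral and adjoint solutions satisfy the Bäcklund transformations \eqref{bt3_smkdv} and \eqref{bt4_smkdv}. The strategy mirrors the analogous reformulation done for the gsKdV case in Proposition~\ref{prop4}: one takes the explicit spectral-data expressions and feeds them into the BTs, reducing everything to the first-order ODEs \eqref{b1x} that the spectral quantities are already known to satisfy.

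First I would focus on the BT~\eqref{bt3_smkdv} using the first batch of definitions. Writing $p_1 = {\rm e}^{r_{[1]}+r} = k_1\bigl(b_1 - \tfrac12\alpha_1\beta_1\bigr)$ and substituting $\zeta_{[1]}=\zeta+\beta_1/b_1$, $\delta_{[1]}=\delta+\alpha_1/b_1$, I would compute $(r_{[1]}-r)_x$. The key observation is that $r_{[1]}+r = \ln p_1$, so $(r_{[1]}+r)_x = p_{1,x}/p_1$, and separately $(r_{[1]}-r)_x = (r_{[1]}+r)_x - 2r_x = p_{1,x}/p_1 - 2q$. Using $p_{1,x} = k_1\bigl(b_{1,x} - \tfrac12(\alpha_1\beta_1)_x\bigr)$ together with \eqref{b1x}, the first equation of \eqref{beta1x}, and the ODE $\alpha_{1,x}=\delta + q\alpha_1 - \tfrac12\alpha_1\delta\zeta$, one expects the $b_1^2$ term from $b_{1,x}=\lambda_1 b_1^2 - 1 + 2b_1 q - \beta_1\delta$ to combine with the defining relation so that $k_1 b_1 = p_1/k_1 \cdot k_1^2 + (\text{nilpotent})$ reproduces the $\sinh$ term $2k_1\sinh(r_{[1]}+r) = k_1(p_1 - p_1^{-1})$; the remaining fermionic pieces should assemble into $\tfrac12(\delta\zeta_{[1]}-\delta_{[1]}\zeta) + \tfrac12 p_1^2(\delta_{[1]}-\delta)(\zeta_{[1]}-\zeta)$. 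The two fermionic components $(\zeta_{[1]}-\zeta)_x$ and $(\delta_{[1]}-\delta)_x$ are more direct: differentiating $\beta_1/b_1$ and $\alpha_1/b_1$ via the quotient rule and inserting \eqref{b1x} should yield the right-hand sides after using $p_1^{-1}={\rm e}^{-r_{[1]}-r}$ to rewrite $1/b_1$ in terms of the exponential.

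The second batch, leading to \eqref{bt4_smkdv}, I would handle identically but using the \emph{adjoint} quantities $b_2^+,\beta_2^+,\alpha_2^+$ and their ODEs; here $p_2 = k_2\bigl(\tfrac12\alpha_2^+\beta_2^+ - b_2^+\bigr)$, and the sign changes relative to the first case are exactly accounted for by the minus signs appearing in the adjoint equations of \eqref{b1x}--\eqref{beta1x} and by the ODE $\alpha_{2,x}^+=\zeta + q\alpha_2^+ + \tfrac12\alpha_2^+\delta\zeta$. The bookkeeping of these signs, and the consistent ordering of the anticommuting factors $\beta_j,\alpha_j,\zeta,\delta$ so that no spurious sign is introduced when a fermion is moved past another, is the aspect requiring the most care. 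Throughout, I would repeatedly use that any product of two odd quantities is even and that triple odd products vanish, which collapses many terms; in particular $\alpha_1\beta_1$, $\beta_1\delta$, and similar pairs are even and nilpotent, so expressions like $p_1^{-1}$ can be expanded to first order in the nilpotent part.

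The main obstacle will be the first (bosonic) equation in each BT, where the $\sinh$ structure must emerge. The difficulty is that $k_1\sinh(r_{[1]}+r) = \tfrac12 k_1(p_1 - p_1^{-1})$ requires correctly identifying $p_1^{-1}$, and $p_1$ carries a nilpotent correction $-\tfrac12 k_1\alpha_1\beta_1$; one must verify that the purely bosonic part $k_1 b_1 = $ (something) $+$ its reciprocal reproduces the hyperbolic sine while the nilpotent corrections regroup precisely into the stated fermionic bilinears $\tfrac12(\delta\zeta_{[1]}-\delta_{[1]}\zeta)$ and $\tfrac12 p_1^2(\delta_{[1]}-\delta)(\zeta_{[1]}-\zeta)$. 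Confirming that these two genuinely independent-looking fermionic contributions collapse to the single right-hand side of \eqref{bt3_smkdv}, rather than leaving a residual term, is where a sign error or a misplaced factor would most likely surface, so I would check that step most carefully before declaring the identity established.
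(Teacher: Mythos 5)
Your proposal is correct and matches the paper's (implicit) argument: the paper states Proposition~\ref{prop10} as a direct-verification result, obtained by substituting the spectral-data expressions into \eqref{bt3_smkdv} and \eqref{bt4_smkdv} and reducing everything to the Riccati-type equations \eqref{b1x} together with the first-order ODEs for $\alpha_1$, $\alpha_2^+$, exactly as you outline. Your identification of the delicate points --- expanding ${\rm e}^{-r_{[1]}-r}=p_1^{-1}$ to first order in the nilpotent $\alpha_1\beta_1$, extracting the $\sinh$ term from $\lambda_1 b_1^2-1$, and tracking fermionic sign conventions --- is precisely where the computation lives, and it does close as you predict.
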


Taking the last two BDTs into consideration, we find a compound BDT summarized by the following proposition.

\begin{Proposition} \label{prop11}
 Suppose that $\bPhi$ solves \eqref{smkdv_sp} and denote $k_1 k_2 {\rm e}^{r-r_{[3]}}$ by $C$.
Let	
	\begin{gather}
	\zeta_{[3]}=\zeta+ B\beta_1,\qquad
\delta_{[3]}=\delta+ B\beta_2^{+},\qquad
C=\frac{\lambda_2\lambda_1 B\bigl(b_2^+-b_1+\beta_2^+\beta_1\bigr)}
{ \lambda_2-\lambda_1 },\label{zeta3}\\
{\bPhi}_{[3]}={\bV}_3 \bPhi,\qquad		{\bV}_3 = \left(
	\begin{matrix}
	\lambda- \lambda_2 -\frac{ b_1 (\lambda_2-\lambda_1) }{b_2^+-b_1+\beta_2^+\beta_1}
 &\frac{- b_1b_2^+ (\lambda_2-\lambda_1) }{b_2^+-b_1+\beta_2^+\beta_1}& \frac{ b_1\beta_2^+(\lambda_2-\lambda_1) }{b_2^+-b_1+\beta_2^+\beta_1}
	\\
	\lambda B+\frac{1}{2}\lambda B^2 \beta_2^+\beta_1&\lambda-\frac{ 2 C}{2-B\beta_2^+\beta_1 } &-{\lambda B \beta_2^+}
	\\
	-{\lambda B\beta_1}& -{ \lambda_2 b_2^+ B \beta_1} & \lambda - \lambda_2 \bigl(1+B \beta_2^{+}\beta_1\bigr)
	\end{matrix}
	\right),\nonumber
	\end{gather}
where
\begin{align}\label{B}
 B=\frac{2( \lambda_2 - \lambda_1) }{ 2\bigl( \lambda_2 b_2^+ - \lambda_1 b_1\bigr)+( \lambda_1 + \lambda_2 )\beta_2^{+}\beta_1},
\end{align}
 and $b_1$, $\beta_1$, $b^+_2$, $\beta_2^+$ are given by \eqref{b12}.
Then \smash{${\bPhi}_{[3],x}={\bU}_{[3]}{\bPhi}_{[3]}$}, \smash{${\bPhi}_{[3],t}={\bM}_{[3]}{\bPhi}_{[3]}$} hold.
By eliminating $b_1$, $\beta_1$, $b^+_2$, $\beta_2^+$, the compatibility condition $ \bV_{3,x}+\bV_3\bU- {\bU}_{[3]} \bV_3 =0$ leads to the following~BT:
\begin{subequations}\label{BT3}
\begin{gather}
(r_{[3]}-r)_x=-B-\frac{(C-\lambda_1)(C-\lambda_2)}{BC}-\frac{\delta_{[3]}\zeta-\delta\zeta_{[3]}}{2}\nonumber\\
\phantom{(r_{[3]}-r)_x=}{}-\frac{(\lambda_2-\lambda_1)(\delta_{[3]}-\delta)(\zeta_{[3]}-\zeta)}{2B^2},\label{61a}\\
(\zeta_{[3]}-\zeta)_x=\zeta_{[3]} B-q(\zeta_{[3]}-\zeta)+\frac{(C-\lambda_2)(\zeta_{[3]}-\zeta)}{B} +\frac{\zeta_{[3]} \delta_{[3]}\zeta}{2},\label{61b}\\
(\delta_{[3]}-\delta)_x=\delta_{[3]} B-q(\delta_{[3]}-\delta) +\frac{ (C-\lambda_1)(\delta_{[3]}-\delta)}{B}+\frac{\delta_{[3]}\zeta_{[3]}\delta}{2},\label{61c}
\end{gather}
\end{subequations}
where the formulae of $B$, $B^{-1}$ are given by
\begin{align*}
&B=\frac{m_1}{m_0}+\frac{m_2}{m_1}-\frac{m_0 m_2^2}{m_1^3},\qquad
B^{-1}=\frac{m_0}{m_1}-\frac{m_0^2 m_2}{m_1^3}+\frac{2 m_0^3m_2^2}{m_1^5},
\end{align*}
with
\begin{gather*}
m_0=4(r+r_{[3]})_x-2(\delta_{[3]}\zeta-\delta\zeta_{[3]}),\\
m_1=4C-\frac{2C_{xx}-4r_x C_x+4 \lambda_1\lambda_2}{C}-3r_x (\delta_{[3]}\zeta-\delta\zeta_{[3]})-2\delta_{[3]}\zeta\delta\zeta_{[3]}
\\
\phantom{m_1=}{}+(\delta_{[3]}-\delta) \zeta_x-\delta_x (\zeta_{[3]}-\zeta),
\\
m_2=3C(\delta_{[3]}\zeta-\delta\zeta_{[3]})-3\lambda_1(\delta_{[3]}-\delta)\zeta +3 \lambda_2\delta(\zeta_{[3]}-\zeta) +(\lambda_2-\lambda_1)(\delta_{[3]}-\delta)(\zeta_{[3]}-\zeta).
\end{gather*}
\end{Proposition}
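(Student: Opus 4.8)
The plan is to obtain $\bV_3$ as the composition of the two elementary Darboux matrices $\bV_1$ and $\bV_2$ of Propositions~\ref{prop8} and~\ref{prop9}, in exact analogy with the way the compound matrix $\bW_3$ of Proposition~\ref{prop3} was assembled from $\bW_1$ and $\bW_2$. Concretely, I would apply the BDT of Proposition~\ref{prop9} at $\lambda_2$ to pass from $(r,\zeta,\delta)$ to an intermediate triple, then apply the BDT of Proposition~\ref{prop8} at $\lambda_1$, and finally eliminate the intermediate fields. Using Proposition~\ref{prop10}, each elementary step is re-expressed through the solutions of the spectral problem \eqref{smkdv_sp} and its adjoint \eqref{smkdv_sp3}; this is what turns the intermediate potentials into the algebraic combinations of $b_1,\beta_1,b_2^+,\beta_2^+$ appearing in $\bV_3$, and it is what forces the introduction of $B$ and $C$ in \eqref{zeta3}, \eqref{B} as the natural compound parameters.

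To establish the two linear problems ${\bPhi}_{[3],x}={\bU}_{[3]}{\bPhi}_{[3]}$ and ${\bPhi}_{[3],t}={\bM}_{[3]}{\bPhi}_{[3]}$, I would not verify the matrix identities by brute force but rather exploit the fact that a product of Darboux matrices is again a Darboux matrix: if one factor intertwines the intermediate operator with $\bU_{[3]}$ and the other intertwines $\bU$ with the intermediate one, then their product $\bV_3$ intertwines $\bU$ with $\bU_{[3]}$ (and likewise in the temporal variable), so $\bPhi_{[3]}=\bV_3\bPhi$ automatically solves the transformed spatial and temporal equations. The remaining task at this stage is purely algebraic: to confirm that the product collapses to the displayed closed form for $\bV_3$. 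Here the key simplifying inputs are the Riccati-type equations \eqref{b1x} satisfied by $b_1,\beta_1,b_2^+,\beta_2^+$ together with the identity $b_j^+=b_j-\beta_j^+\beta_j$ noted after \eqref{b1x}, which let me replace every $x$-derivative of an auxiliary variable by an algebraic expression.

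The derivation of the Bäcklund transformation \eqref{BT3} then proceeds by eliminating $b_1,\beta_1,b_2^+,\beta_2^+$ in favour of the fields. I would differentiate the defining relations \eqref{zeta3} for $\zeta_{[3]},\delta_{[3]},C$ and the definition $C=k_1k_2{\rm e}^{r-r_{[3]}}$, substitute \eqref{b1x} to remove derivatives of the auxiliary variables, and organise the result so that \eqref{61a}--\eqref{61c} emerge with $B$ and $C$ as the only surviving auxiliary quantities. To turn this into a transformation in the fields alone, $B$ and $B^{-1}$ must be expressed through the combinations $m_0,m_1,m_2$: I would read off $m_0$ and $m_1$ from the first derivatives of $r+r_{[3]}$ and of $C$, identify $m_2$ as the fermion-bilinear correction, and then invert the relation defining $B$, which is linear in $B$ modulo the nilpotent terms. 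The truncation of the resulting series is guaranteed because $m_2$ is built entirely from products of two fermions and is therefore nilpotent (indeed $m_2^3=0$ with the four generators $\zeta,\delta,\zeta_{[3]},\delta_{[3]}$), so only terms up to $m_2^2$ survive, exactly as displayed.

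The main obstacle I anticipate is precisely this elimination and inversion inside the Grassmann algebra: keeping track of parities and signs while reducing the product of two $3\times3$ super-matrices, and then inverting the defining relation for $B$ in closed form so that no infinite expansion remains. The compatibility verification is conceptually routine once the intertwining viewpoint is adopted, but the bookkeeping---matching the fermion-bilinear terms hidden in $m_1$ and $m_2$, and checking that the truncated expressions for $B$ and $B^{-1}$ are genuinely reciprocal modulo the nilpotent ideal---is where the real labour lies.
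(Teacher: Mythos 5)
Your proposal is correct in substance, and its second half is exactly the paper's argument; the genuine difference lies in how the intertwining property of $\bV_3$ is established. The paper does not compose anything: it checks ${\bPhi}_{[3],x}={\bU}_{[3]}{\bPhi}_{[3]}$ and ${\bPhi}_{[3],t}={\bM}_{[3]}{\bPhi}_{[3]}$ by direct (``tedious but straightforward'') computation, noting that $\bV_{3,x}+\bV_3\bU-{\bU}_{[3]}\bV_3=0$ together with \eqref{zeta3} reduces precisely to the Riccati system \eqref{b1x}. You instead realise $\bV_3$ as the product of the elementary matrices of Propositions~\ref{prop8} and~\ref{prop9}, the outer factor built on the intermediate solution, in analogy with $\bW_3=\widehat{\bW}_1\bW_2$ of Section~\ref{sec4}; that route is viable---it is how the paper motivates the compound BDT---and it buys conceptual clarity about the origin of $B$ and $C$, but it carries two pieces of bookkeeping your outline should make explicit: (i) rows of $\bV_1$ and columns of $\bV_2$ carry overall factors of $\lambda$, so the product is quadratic in $\lambda$ and collapses to $\lambda\bV_3$ rather than to $\bV_3$ itself (harmless, since a constant scalar factor preserves the Darboux property, but it must be said); (ii) Propositions~\ref{prop8} and~\ref{prop9} are stated only for the spatial linear problem, so the temporal intertwining of each factor needs a separate, if standard, justification before the $t$-equation for $\bPhi_{[3]}$ follows. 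On the BT side your plan matches the paper's: inverting \eqref{zeta3} and \eqref{B} gives $\beta_1=(\zeta_{[3]}-\zeta)/B$, $\beta_2^+=(\delta_{[3]}-\delta)/B$ and explicit formulas for $b_1$, $b_2^+$, and substitution into \eqref{b1x} yields \eqref{61a}--\eqref{61c} together with a fourth equation for $B_x$, which your differentiation scheme would surface automatically even though you never name it. One imprecision to repair: the relation $m_0B=m_1+m_2/B$ that you solve by nilpotent truncation is neither the defining relation \eqref{B} ``read in the fields'' nor a first-derivative statement ($m_1$ contains $C_{xx}$); it is obtained by differentiating \eqref{61a} once more, replacing $\zeta_{[3],x}$, $\delta_{[3],x}$ and $B_x$ by means of \eqref{61b}, \eqref{61c} and the $B_x$ equation, and then simplifying with \eqref{61a}. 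Your closing nilpotency argument ($m_2$ is bilinear in the four fermions $\zeta$, $\delta$, $\zeta_{[3]}$, $\delta_{[3]}$, hence $m_2^3=0$ and the expansions for $B$ and $B^{-1}$ truncate as displayed) is exactly the paper's final step.
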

\begin{proof}
 It is tedious but straightforward to check that
 ${\bPhi}_{[3],x}={\bU}_{[3]}{\bPhi}_{[3]}$, ${\bPhi}_{[3],t}={\bM}_{[3]}{\bPhi}_{[3]}$ hold, and the compatibility condition $ \bV_{3,x}+\bV_3\bU- {\bU}_{[3]} \bV_3 =0$ together with \eqref{zeta3} leads to~\eqref{b1x}.
 Now we show that the BT \eqref{BT3} may be obtained from \eqref{zeta3} and \eqref{b1x}.
Firstly, \eqref{zeta3} and~\eqref{B} lead to
 \begin{gather*}
 \beta_1=\frac{\zeta_{[3]}-\zeta}{B},
\qquad
\beta_2^+=\frac{ \delta_{[3]}-\delta }{B},\\ b_1=\frac{1}{B}-\frac{C}{\lambda_1 B}+
\frac{\beta_2^+ \beta_1}{2 },
\qquad
 b_2^+
=\frac{1}{B}-\frac{C}{\lambda_2 B}-\frac{\beta_2^+ \beta_1}{2 }. \label{b1}
\end{gather*}
Plugging above into the first-order differential equations \eqref{b1x}, one may obtain (\ref{BT3}) and the following equation:
\begin{gather}
B_x=2C-\lambda_1-\lambda_2-2r_x B
+B^2+\frac{1}{2}B(\delta_{[3]}\zeta-\delta\zeta_{[3]})\nonumber\\
\phantom{B_x=}{}
+\frac{1}{2B}(\lambda_2-\lambda_1)(\delta_{[3]}-\delta)(\zeta_{[3]}-\zeta).\label{Bx}
\end{gather}
The auxiliary variable $B$ may be eliminated as follows.
We first differentiate \eqref{61a} with respect to $x$, and meanwhile use \eqref{61b}, \eqref{61c} and \eqref{Bx} to replace $\zeta_{[3],x}$, $\delta_{[3],x}$, $B_x$, thus get another algebraic equation for $B$. We then use \eqref{61a} to simplify this equation and find
$
m_0B=m_1+\frac{m_2}{B}$.
As $m_2^2= 18 (C-\lambda_1)(C-\lambda_2)\zeta_{[3]}\delta_{[3]}\zeta\delta$, $m_2^3=0,$ we may solve the last equation and obtain $B$.
\end{proof}

\begin{Remark}Setting the fermionic variables to zero, we recover the generalized BT for the mKdV equation
\begin{gather*}
[ (r_{[3]}-r)_{xx}-4 k_1 k_2 \sinh{(r_{[3]}-r)}]^2\\
\qquad=
 (r_{[3]}+r)_x^2\big[(r_{[3]}-r)_x^2+4\bigl( k_1^2+ k_2^2\bigr)-8 k_1 k_2 \cosh{(r_{[3]}-r)} \big],
\end{gather*}
which was referred as the type-II BT for the mKdV hierarchy in \cite{grz} (see also \cite{ml}).
\end{Remark}

Now we consider the limit of the compound BDT and construct one more BDT. To this end,
let $\lambda_2=\lambda_1+ \epsilon$ and
$\bPhi_2^+= \bPhi_1^+(x, t;\lambda_1+ \epsilon)$, then
 $b_2^+=b_1^+ (x, t; \lambda_1 +\epsilon)$, $ \beta_2^+=\beta_1^+( x, t;\lambda_1 +\epsilon).$
Choosing
\smash{$
b_1^+ =b_1- \beta_1^+ \beta_1$},
 the limit of $B$ is found to be
\begin{equation}\label{BK}
\lim_{\epsilon \rightarrow 0} B=K\equiv
 \frac{2}{ 2b_1-\beta_1^{+}\beta_1+2\lambda_1 \bigl( b_1'-\beta_1^{+}\beta_1'\bigr) },
\end{equation}
where $b_1'=\frac{\partial b_1}{\partial \lambda_1}$, $\beta_1'=\frac{\partial \beta_1}{\partial \lambda_1}$.
Thus after taking the limit of the last property as $\epsilon \rightarrow 0$, from Proposition~\ref{prop11} we have the following proposition.

\begin{Proposition} \label{prop12}
Suppose that $\bPhi$ satisfies \eqref{smkdv_sp}, and let
\begin{align*}
&\zeta_{[4]}= \zeta+ K \beta_1,\qquad
\delta_{[4]}=\delta+ K \beta_1^{+},\qquad
 {\rm e}^{r-r_{[4]}}={\lambda_1K \bigl(b_1'-\beta_1^+\beta_1'\bigr) },\\
&{\bPhi}_{[4]}={\bV}_4\bPhi,\qquad
		{\bV}_4= \left(
	\begin{matrix}
	\lambda- \lambda_1 -\frac{ b_1 }{b_1'-\beta_1^+\beta_1' } &\frac{- b_1(b_1- \beta_1^+ \beta_1)}{b_1'-\beta_1^+\beta_1'}& \frac{ b_1\beta_1^+}{b_1'-\beta_1^+\beta_1'}
	\\
	\lambda K +\frac{\lambda }{2} K ^2 \beta_1^+\beta_1&\lambda-\frac{2 \lambda_1 {\rm e}^{r-r_{[4]}}}{2-{K} \beta_1^+\beta_1} &-{\lambda K \beta_1^+}
	\\
	-{\lambda K \beta_1}& -{ \lambda_1 b_1 K \beta_1} & \lambda- \lambda_1\bigl(1+ K \beta_1^{+}\beta_1\bigr)
	\end{matrix}
	\right),
	\end{align*}
where $b_1$, $\beta_1$, $\beta_1^+$ are defined by \eqref{b12}, $K$ is given in \eqref{BK}. Then
${\bPhi}_{[4],x}={\bU}_{[4]}{\bPhi}_{[4]}$, ${\bPhi}_{[4],t}={\bM}_{[4]}{\bPhi}_{[4]}$ hold.
The compatibility condition $ \bV_{4,x}+\bV_4\bU- {\bU}_{[4]} \bV_4 =0$, after eliminating $b_1$, $\beta_1$, \smash{$\beta_1^+$}, leads to \eqref{BT3}
 with the replacement $(\zeta_{[3]}\rightarrow \zeta_{[4]}, \delta_{[3]}\rightarrow \delta_{[4]}, r_{[3]}\rightarrow r_{[4]}, k_2\rightarrow k_1)$.
\end{Proposition}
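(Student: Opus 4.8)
The plan is to obtain Proposition~\ref{prop12} as the confluent limit $\lambda_2\to\lambda_1$ of the compound transformation of Proposition~\ref{prop11}, specialized to $\bPhi_2^+=\bPhi_1^+(x,t;\lambda_1+\epsilon)$. Since every assertion of Proposition~\ref{prop11} holds for all admissible pairs with $\lambda_1\neq\lambda_2$, it suffices to show that all relevant quantities extend continuously to $\epsilon=0$ and that the resulting limits are exactly the objects $\bV_4$, $\zeta_{[4]}$, $\delta_{[4]}$, $K$ and the $\epsilon=0$ form of \eqref{BT3} appearing in the statement; the linear systems and the compatibility identity then survive by continuity.

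First I would record the $\epsilon$-expansions. Writing $'=\partial_{\lambda_1}$ and using $b_2^+=b_1^+(\lambda_1+\epsilon)$, $\beta_2^+=\beta_1^+(\lambda_1+\epsilon)$ together with the chosen normalization $b_1^+=b_1-\beta_1^+\beta_1$, a Taylor expansion gives
\[
b_2^+-b_1+\beta_2^+\beta_1=\epsilon\bigl(b_1'-\beta_1^+\beta_1'\bigr)+O\bigl(\epsilon^2\bigr),
\]
where the order-$\epsilon^0$ part vanishes by the normalization and the order-$\epsilon$ coefficient collapses to $b_1'-\beta_1^+\beta_1'$ upon differentiating that same normalization. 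Substituting this and $\lambda_2-\lambda_1=\epsilon$ into the definition \eqref{B} of $B$ shows that the apparent $0/0$ indeterminacy is regular and that $B\to K$ with $K$ as in \eqref{BK}; likewise $C=k_1k_2{\rm e}^{r-r_{[3]}}=\frac{\lambda_1\lambda_2 B\bigl(b_2^+-b_1+\beta_2^+\beta_1\bigr)}{\lambda_2-\lambda_1}\to\lambda_1^2K\bigl(b_1'-\beta_1^+\beta_1'\bigr)$, matching ${\rm e}^{r-r_{[4]}}=\lambda_1 K\bigl(b_1'-\beta_1^+\beta_1'\bigr)$ through $C\to\lambda_1{\rm e}^{r-r_{[4]}}$.

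With these limits in hand, I would check entrywise that $\bV_3\to\bV_4$. The only entries requiring care are those carrying the factor $\frac{\lambda_2-\lambda_1}{b_2^+-b_1+\beta_2^+\beta_1}=\frac{\epsilon}{\epsilon(b_1'-\beta_1^+\beta_1')+O(\epsilon^2)}\to\frac{1}{b_1'-\beta_1^+\beta_1'}$, which is precisely the quantity appearing throughout $\bV_4$; all other entries are continuous in $\epsilon$ and pass to the limit termwise via $\zeta_{[3]}\to\zeta_{[4]}$ and $\delta_{[3]}\to\delta_{[4]}$. Because $\bPhi_{[3]}=\bV_3\bPhi$ solves $\bPhi_{[3],x}=\bU_{[3]}\bPhi_{[3]}$ and $\bPhi_{[3],t}=\bM_{[3]}\bPhi_{[3]}$ for every $\epsilon\neq0$, and since $r_{[3]},\zeta_{[3]},\delta_{[3]}$ (hence $\bU_{[3]},\bM_{[3]}$) converge, passing to the limit yields $\bPhi_{[4],x}=\bU_{[4]}\bPhi_{[4]}$, $\bPhi_{[4],t}=\bM_{[4]}\bPhi_{[4]}$ and the compatibility $\bV_{4,x}+\bV_4\bU-\bU_{[4]}\bV_4=0$.

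Finally, for the Bäcklund system I would take the limit of \eqref{BT3} and of the auxiliary equation \eqref{Bx} for $B$. Setting $\lambda_2=\lambda_1$, $k_2=k_1$ and replacing $\zeta_{[3]},\delta_{[3]},r_{[3]}$ by their $[4]$-counterparts turns \eqref{BT3} and \eqref{Bx} into the asserted relations, and the elimination of $K$ proceeds exactly as the elimination of $B$ in Proposition~\ref{prop11}, the identity $m_0B=m_1+m_2/B$ with $m_2^3=0$ being stable under the limit. I expect the main obstacle to be purely bookkeeping: verifying that every indeterminate ratio collapses to a finite value and that the fermionic Taylor coefficients are tracked so that the normalization $b_1^+=b_1-\beta_1^+\beta_1$ is compatible with $\partial_{\lambda_1}$. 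Once the expansion of $b_2^+-b_1+\beta_2^+\beta_1$ is secured, the remaining steps are routine continuity arguments.
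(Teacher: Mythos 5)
Your proposal is correct and follows essentially the same route as the paper: the authors also obtain Proposition~\ref{prop12} as the confluent limit $\lambda_2=\lambda_1+\epsilon\to0$ of Proposition~\ref{prop11}, with $\bPhi_2^+=\bPhi_1^+(x,t;\lambda_1+\epsilon)$, the normalization $b_1^+=b_1-\beta_1^+\beta_1$, and the computation $\lim_{\epsilon\to0}B=K$ of \eqref{BK}. Your added bookkeeping (the expansion $b_2^+-b_1+\beta_2^+\beta_1=\epsilon\bigl(b_1'-\beta_1^+\beta_1'\bigr)+O\bigl(\epsilon^2\bigr)$, the identification $C\to\lambda_1{\rm e}^{r-r_{[4]}}$, and the entrywise check $\bV_3\to\bV_4$) simply makes explicit what the paper leaves implicit.
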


As Hu's smKdV equation is related to the gsmKdV equation, it is possible to obtain its BDTs from the above results. Indeed, taking account of the transformations
$ y=q+\frac{1}{2}\delta\zeta$, $\xi=\delta_{x}+q \delta$, $y_{[j]}=q_{[j]}+\frac{1}{2}\delta_{[j]}\zeta_{[j]}$, $
 \xi_{[j]}=\delta_{[j],x}+q_{[j]} \delta_{[j]}$,
and Propositions \ref{prop8}--\ref{prop12},
we have
\begin{gather*}
\xi_{[1]}=\xi+{\lambda_1 b_1\delta} ,\qquad y_{[1]}=y+\lambda_1 b_1+\frac{\delta\beta_1-1}{b_1};\\
\xi_{[2]}=\xi+{\lambda_2\bigl( {b_2^+} \delta+\beta_2^+\bigr)},\qquad y_{[2]}=y+\lambda_2 b_2^++\frac{\delta \alpha_2^+-1}{b_2^+}
+\frac{\beta_2^+}{\bigl(b_2^+\bigr)^2}; \\
\xi_{[3]}=\xi
+\frac{(\lambda_1- \lambda_2) \bigl(b_2^+\delta+\beta_2^+\bigr)b_1}{ b_2^+- b_1 +\beta_2^+\beta_1},\\
 y_{[3]}=y
+\frac{(\lambda_1- \lambda_2) b_1 b_2^+}{ b_2^+- b_1 +\beta_2^+\beta_1}
+\frac{(\lambda_1- \lambda_2) (1-\delta \beta_1)}{\lambda_2 b_2^+- \lambda_1 b_1 +\lambda_2\beta_2^+\beta_1};\\
\xi_{[4]}=\xi-\frac{b_1\bigl(b_1\delta-\beta_1^+\beta_1\delta+\beta_1^+\bigr)}{b_1'-\beta_1^+\beta_1'},\\
y_{[4]}=y
-\frac{ b_1 \bigl(b_1-\beta_1^+\beta_1\bigr)}{b_1'-\beta_1^+\beta_1'}
-\frac{(1-\delta \beta_1)}{b_1+\lambda_1 \bigl(b_1'-\beta_1^+\beta_1'\bigr)}.
\end{gather*}
Hence, these formulae, together with the transformations of $\zeta_{[j]}$ and Propositions \ref{prop8}--\ref{prop12}, also yield the BDTs for Hu's smKdV equation \eqref{smkdv1}.
Since its DTs have already been considered in~\cite{zty} under a different spectral problem, how to relate their results to ours is a problem deserving further study.

Finally, we can show that the BDT presented in last proposition survives under the two reductions mentioned in the introduction.
For the first reduction $\zeta=\delta$, its BDT is described by the following proposition.
\begin{Proposition} \label{prop13}
Suppose that $\bPhi$ satisfies \eqref{smkdv_sp} with $\zeta=\delta$. Let
 \begin{gather*}
{\bPhi}_{[4]}={\bV}_5\bPhi,\qquad {\bV}_5= {\bV}_4|_{\beta_{1}^+=\beta_1},\nonumber \\
\zeta_{[4]}= \zeta+ \frac{\beta_1}{ b_1+\lambda_1 ( b_1'-\beta_1 \beta_1') } ,\qquad
 {\rm e}^{r-r_{[4]}}=1- \frac{ b_1}{ b_1+\lambda_1 ( b_1'-\beta_1 \beta_1') },
\end{gather*}
where
$b_1$, $\beta_1$ are defined in \eqref{b12}.
Then
${\bPhi}_{[4],x}={\bU}_{[4]}{\bPhi}_{[4]}$, ${\bPhi}_{[4],t}={\bM}_{[4]}{\bPhi}_{[4]}$ hold.
The corresponding
BT is
\begin{align*}
&(r_{[4]}-r)_x=-{R}-\frac{\lambda_1 ({\rm e}^{r-r_{[4]}} -1)^2}{{R} {\rm e}^{r-r_{[4]}}}+\zeta\zeta_{[4]},\\
&(\zeta_{[4]}-\zeta)_x=\zeta_{[4]} {R}-r_x (\zeta_{[4]}-\zeta)+\frac{\lambda_1 ( {\rm e}^{r-r_{[4]}}-1 )(\zeta_{[4]}-\zeta)}{{R}},
\end{align*}
where
\begin{align*}
&{R} =\frac{m_3}{2(r+ r_{[4]})_x+2\zeta\zeta_{[4]} }+\frac{3\lambda_1 ( 1- {\rm e}^{r-r_{[4]}})\zeta\zeta_{[4]}}{m_3},\\
&m_3=( r_{[4]}-r )_{xx}+r_x^2-( r_{[4],x})^2
+2\lambda_1( {\rm e}^{r-r_{[4]}}- {\rm e}^{r_{[4]}-r } )
+3 r_x \zeta\zeta_{[4]}
+(\zeta_{[4]}-\zeta) \zeta_x.
\end{align*}
\end{Proposition}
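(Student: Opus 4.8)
The plan is to derive Proposition~\ref{prop13} as the $\zeta=\delta$ reduction of Proposition~\ref{prop12}, so the first task is to confirm that this reduction is internally consistent. Under $\zeta=\delta$ each product $\delta\zeta=\zeta\zeta$ vanishes, so the quadratic terms drop out of the flow equations \eqref{b1x} and the pairs $(b_1,\beta_1)$ and $\bigl(b_1^+,\beta_1^+\bigr)$ satisfy identical first-order systems. Together with $b_1^+=b_1-\beta_1^+\beta_1$, and the analogous symmetry of the temporal part \eqref{smkdv_sp3}, this shows that $\beta_1^+=\beta_1$ (hence $b_1^+=b_1$) may be imposed consistently along both the $x$- and $t$-flows. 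I would then substitute $\beta_1^+=\beta_1$ into \eqref{BK}: since $\beta_1^+\beta_1=\beta_1\beta_1=0$, the limit $K$ collapses to $\bigl(b_1+\lambda_1(b_1'-\beta_1\beta_1')\bigr)^{-1}$, reproducing the stated $\zeta_{[4]}$ and, after a short rearrangement, ${\rm e}^{r-r_{[4]}}=1-b_1 K$. Crucially, $\delta_{[4]}=\delta+K\beta_1^+=\zeta+K\beta_1=\zeta_{[4]}$, so the constraint $\zeta=\delta$ is propagated by the transformation; with $\bV_5:=\bV_4|_{\beta_1^+=\beta_1}$ the Lax identities $\bPhi_{[4],x}=\bU_{[4]}\bPhi_{[4]}$ and $\bPhi_{[4],t}=\bM_{[4]}\bPhi_{[4]}$ are then immediate specializations of Proposition~\ref{prop12}.

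The substance of the proof is the reduced B\"acklund transformation, which I would obtain by specializing \eqref{BT3} (already carrying the Proposition~\ref{prop12} replacements $\zeta_{[3]}\to\zeta_{[4]}$, $\delta_{[3]}\to\zeta_{[4]}$, $r_{[3]}\to r_{[4]}$, $\lambda_2\to\lambda_1$) to $\delta=\zeta$. Setting $C=\lambda_1{\rm e}^{r-r_{[4]}}$ and using Grassmann commutativity to rewrite combinations such as $\delta_{[3]}\zeta-\delta\zeta_{[3]}\to\zeta_{[4]}\zeta-\zeta\zeta_{[4]}=-2\zeta\zeta_{[4]}$, together with the vanishing of any repeated odd factor, the equation \eqref{61a} collapses to the first displayed relation of the proposition (the $(\lambda_2-\lambda_1)$-term dropping out), while \eqref{61b} and \eqref{61c} coincide and yield the single equation for $(\zeta_{[4]}-\zeta)_x$, the term $\tfrac12\zeta_{[4]}\delta_{[4]}\zeta\to\tfrac12\zeta_{[4]}\zeta_{[4]}\zeta=0$ disappearing.

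The most delicate point is the closed form of the auxiliary variable, here denoted $R$. I would follow the elimination already performed in the proof of Proposition~\ref{prop11}, computing the reductions of $m_0$, $m_1$, $m_2$ under $\delta=\zeta$. Two observations control the outcome. First, the quartic $\zeta_{[3]}\delta_{[3]}\zeta\delta$ becomes $\zeta_{[4]}\zeta_{[4]}\zeta\zeta=0$, so the nilpotency order of $m_2$ falls from three to two, i.e.\ $m_2^2=0$. Second, a direct calculation gives $m_0=4\bigl[(r+r_{[4]})_x+\zeta\zeta_{[4]}\bigr]$, $m_1=2m_3$, and $m_2=6\lambda_1\bigl(1-{\rm e}^{r-r_{[4]}}\bigr)\zeta\zeta_{[4]}$. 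The elimination relation $m_0B=m_1+m_2/B$ then becomes $m_0R^2-m_1R-m_2=0$, and because $m_2^2=0$ its solution truncates after a single correction to $R=\frac{m_1}{m_0}+\frac{m_2}{m_1}=\frac{m_3}{m_0/2}+\frac{3\lambda_1\bigl(1-{\rm e}^{r-r_{[4]}}\bigr)\zeta\zeta_{[4]}}{m_3}$, matching the statement. I expect the main obstacle to be purely computational: assembling the quadratic-in-fermion part of $m_1$ and verifying the identity $m_1=2m_3$, since terms such as $(\zeta_{[4]}-\zeta)\zeta_x-\zeta_x(\zeta_{[4]}-\zeta)=2(\zeta_{[4]}-\zeta)\zeta_x$ must be reduced carefully under the anticommutation rules.
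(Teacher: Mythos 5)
Your proposal is correct and follows essentially the same route as the paper, which presents Proposition~\ref{prop13} precisely as the $\zeta=\delta$ (hence $\beta_1^+=\beta_1$, $b_1^+=b_1$) reduction of Proposition~\ref{prop12}. Your computational details check out: the reduced $K$ indeed collapses to $\bigl(b_1+\lambda_1(b_1'-\beta_1\beta_1')\bigr)^{-1}$, the constraint propagates since $\delta_{[4]}=\zeta_{[4]}$, and the identities $m_0=4\bigl[(r+r_{[4]})_x+\zeta\zeta_{[4]}\bigr]$, $m_1=2m_3$, $m_2=6\lambda_1\bigl(1-{\rm e}^{r-r_{[4]}}\bigr)\zeta\zeta_{[4]}$ with $m_2^2=0$ yield exactly the stated closed form of $R$.
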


In particular, we have a BDT for the Kupershmidt's smKdV equation \eqref{kuper-smkdv}. Indeed, let~${R_0\equiv \phi_{21}\phi_{11}'-\phi_{11}\phi_{21}'+\phi_{31}\phi_{31}'}$ and notice that $R_{0,x}=-\phi_{11}^2$ and $ b_1'-\beta_1 \beta_1'=-R_0/\phi_{21}^2$ hold. Then the DT presented in last proposition, after being rewritten,
coincides with the DT in \cite{ztl,zty}.

For the second reduction, our result is summarized as follows.
\begin{Proposition} \label{prop14}
Suppose that $\bPhi$ satisfies \eqref{smkdv_sp} with $\delta= \zeta_{xx}+\bigl(q_x-q^2\bigr) \zeta-\frac{1}{2}\zeta_{xx} \zeta_x\zeta$. Let
 \begin{gather*}
{\bPhi}_{[4]}={\bV}_6\bPhi,\qquad
{\bV}_6={\bV}_4|_{\beta_{1}^+=\zeta_x+\lambda_1\beta_1+\zeta(q+\lambda_1 b_1)},
 \\
 \zeta_{[4]} = \zeta+ {S} \beta_1,\qquad
 {\rm e}^{r-r_{[4]}}={\lambda_1{S} [b_1'+\beta_1'(\zeta_x+\lambda_1\beta_1+\zeta(q+\lambda_1 b_1))] },
\end{gather*}
where
\begin{equation*}
{S}=
 \frac{2}{ 2b_1+2\lambda_1 b_1'+(\beta_1+2\lambda_1\beta_1')(\zeta_x+\lambda_1\beta_1+\zeta(q+\lambda_1 b_1))},
\end{equation*}
$b_1$ and $\beta_1$ are defined by \eqref{b12}.
Then
${\bPhi}_{[4],x}={\bU}_{[4]}{\bPhi}_{[4]}$, ${\bPhi}_{[4],t}={\bM}_{[4]}{\bPhi}_{[4]}$ hold.
The related
BT is given by
\begin{gather*}
q_{[4]}=r_{[4],x}=q-{S}-\frac{\lambda_1 ({\rm e}^{r-r_{[4]}} -1)^2}{{S} {\rm e}^{r-r_{[4]}}}
-\frac{1}{2}{S}\zeta_x \zeta\\
\phantom{q_{[4]}=r_{[4],x}=}{}
-\frac{1}{2} (\zeta_{[4]}- \zeta) \big[\zeta_{xx}+\zeta\bigl(\lambda_1+q_x-q^2\bigr)\big]
+\frac{1}{4} \zeta_{[4]}\zeta_{xx}\zeta_x \zeta,
\\
\zeta_{[4],x}=\zeta_x+\zeta_{[4]} {S}-q(\zeta_{[4]}-\zeta)
+\frac{\lambda_1}{{S}} ({\rm e}^{r-r_{[4]}}-1)(\zeta_{[4]}-\zeta)
+\frac{1}{2}\zeta \zeta_{[4]} (\zeta_{xx}+{S}\zeta_x),\\
{S}_x=2\lambda_1( {\rm e}^{r-r_{[4]}}-1)
-{S} q-{S} r_{[4],x} -\frac{\lambda_1 ({\rm e}^{r-r_{[4]}} -1)^2}{ {\rm e}^{r-r_{[4]}}}.
\end{gather*}
\end{Proposition}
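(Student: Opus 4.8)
The plan is to recognize Proposition~\ref{prop14} as the specialization of Proposition~\ref{prop12} to the invariant submanifold cut out by the reduction $\delta=\zeta_{xx}+(q_x-q^2)\zeta-\frac{1}{2}\zeta_{xx}\zeta_x\zeta$, so that almost all of the analytic content is inherited from the already established unreduced BDT. The first step is to check that the prescribed value $\beta_1^+=\zeta_x+\lambda_1\beta_1+\zeta(q+\lambda_1 b_1)$ turns $K$ of \eqref{BK} into $S$. Inserting this expression into the denominator of $K$ and using Grassmann anticommutativity $\beta_1^+\beta_1=-\beta_1\beta_1^+$ and $\beta_1^+\beta_1'=-\beta_1'\beta_1^+$, one obtains $2b_1+2\lambda_1 b_1'+(\beta_1+2\lambda_1\beta_1')\beta_1^+$, which is exactly the denominator of $S$; hence $S=K$, ${\bV}_6={\bV}_4$, and the linear equations ${\bPhi}_{[4],x}={\bU}_{[4]}{\bPhi}_{[4]}$ and ${\bPhi}_{[4],t}={\bM}_{[4]}{\bPhi}_{[4]}$ will follow directly from Proposition~\ref{prop12}, provided the ansatz for $\beta_1^+$ is admissible.

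Securing that admissibility is the second step: I must verify that $\beta_1^+=\zeta_x+\lambda_1\beta_1+\zeta(q+\lambda_1 b_1)$ is consistent with the Riccati equation \eqref{beta1x} that $\beta_1^+$ is required to satisfy, namely $\beta_{1,x}^+=\delta+\beta_1^+(q+\lambda_1 b_1)-\frac{1}{2}\beta_1^+\delta\zeta$ (here $b_1^+=b_1-\beta_1^+\beta_1$ makes $\lambda_1 b_1^+$ collapse to $\lambda_1 b_1$ against $\beta_1^+$, since $\beta_1^+\beta_1^+=0$). Differentiating the ansatz in $x$, eliminating $b_{1,x}$ and $\beta_{1,x}$ by \eqref{b1x} and \eqref{beta1x}, and using the reduction together with $\delta\zeta=\zeta_{xx}\zeta$, the target identity should close up once the cubic monomials cancel. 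This is the same device by which $\delta_1^+=\eta_x-y_1\eta+\lambda_1\zeta_1+\eta\eta_x\zeta_1$ was made admissible in Proposition~\ref{prop6}, and the two prescriptions in fact correspond under the Miura/gauge map relating the gsKdV and gsmKdV spectral problems.

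Granting consistency, the third step is to show that the reduction is propagated, i.e.\ that the transformed fields satisfy $\delta_{[4]}=\zeta_{[4],xx}+(q_{[4],x}-q_{[4]}^2)\zeta_{[4]}-\frac{1}{2}\zeta_{[4],xx}\zeta_{[4],x}\zeta_{[4]}$. From Proposition~\ref{prop12} one reads $\zeta_{[4]}=\zeta+S\beta_1$ and $\delta_{[4]}=\delta+S\beta_1^+$ with the specialized $\beta_1^+$; computing $\zeta_{[4],xx}$ and $q_{[4]}$ from the Riccati relations and checking the claimed identity is the direct analogue of the verification $\overline{\xi}=\overline{\eta}_{xx}-\overline{u}\,\overline{\eta}$ in Proposition~\ref{prop6}. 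With preservation in hand, the stated BT is obtained by substituting the reduction into the BT \eqref{BT3} of Proposition~\ref{prop12} (with $k_2\to k_1$): the $\delta_{[4]}$-equation becomes redundant, the $r_{[4]}$- and $\zeta_{[4]}$-equations collapse to the first two displayed relations, and the evolution equation for $S$ arises from the $\epsilon\to0$ limit of the auxiliary equation \eqref{Bx} simplified by the reduction and by the definition of ${\rm e}^{r-r_{[4]}}$.

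The main obstacle is purely the fermionic bookkeeping in the second and third steps, not any conceptual point. The reduction injects the cubic monomial $\zeta_{xx}\zeta_x\zeta$ and, after multiplication by $\zeta$, the simplification $\delta\zeta=\zeta_{xx}\zeta$; one must therefore track signs carefully under anticommutation and repeatedly invoke nilpotency so that all higher Grassmann monomials either vanish or cancel in pairs. I expect the verification that $\delta_{[4]}$ again obeys the reduction to be the most laborious point, as it requires differentiating $\zeta_{[4]}=\zeta+S\beta_1$ twice and using the $S_x$ relation, but no idea beyond the systematic use of \eqref{b1x}, \eqref{beta1x} and the reduction is needed.
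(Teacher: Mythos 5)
Your proposal is correct and follows exactly the route the paper intends: the paper states Proposition~\ref{prop14} without proof, as the specialization of Proposition~\ref{prop12} obtained by the substitution $\beta_1^+=\zeta_x+\lambda_1\beta_1+\zeta(q+\lambda_1 b_1)$, in precise analogy with how Proposition~\ref{prop6} reduces Proposition~\ref{prop5} on the KdV side, and your four steps (checking $S=K$ under anticommutation, verifying the ansatz closes under the Riccati equation \eqref{beta1x} using $\delta\zeta=\zeta_{xx}\zeta$ and nilpotency, propagating the reduction to $(\zeta_{[4]},\delta_{[4]},q_{[4]})$, and collapsing \eqref{BT3} with $k_2\to k_1$ and the auxiliary equation \eqref{Bx} into the stated BT) are exactly the computations that substantiate it. Your explicit verification of the Riccati consistency in step two is in fact more detail than the paper itself provides.
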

In this way, we succeed in obtaining a BDT for equation \eqref{m-geng}.

\section{Exact solutions}\label{sec6}
As we know that Darboux transformations and B\"acklund transformations can be employed to generate solutions for the associated nonlinear equations. In this section, we apply the results obtained in last two sections to construct exact solutions of the sKdV and the smKdV systems. To do it, we initiate from a trivial solution of an evolution equation along with the corresponding spectral problem or Riccati type equation, then employ algebraic and differential operations to construct new solutions.

\subsection{Solutions of gsmKdV equation and its reductions}
We start from the zero solution of gsmKdV equation: $q=0$, $\zeta=0$, $\delta=0$. Then by solving~\eqref{b1x} and the differential equations in Propositions \ref{prop10}, we obtain
\begin{gather*}
b_1 =-\frac{1}{ k_1}\coth P_1 ,\qquad \alpha_1 =\gamma_1, \qquad \beta_1 =\theta_1{\rm csch } P_1,
\qquad \beta_1^+ =\mu_1{\rm csch } P_1,
\\
b_2^+=-\frac{1}{ k_2}\tanh P_2 ,\qquad \alpha_2^+=\gamma_2, \qquad \beta_2^+=\mu_2{\rm sech } P_2,
\end{gather*}
where $P_j= k_jx+4 k_j^3t+c_j$, $\lambda_j=k_j^2$, and $c_j$ are arbitrary bosonic constants while $ \gamma_j$, $ \theta_j$, $\mu_j$ are arbitrary fermionic constants, for $ j=1, 2$.

Therefore, Propositions \ref{prop10}--\ref{prop12} give the following three exact solutions:
 \begin{align*}
&\zeta_{[1]}=- k_1\theta_1{\rm sech} P_1,\qquad \delta_{[1]}=- k_1\gamma_1\tanh P_1,\qquad
q_{[1]}=\frac{-2k_1+k_1^2 \theta_1\gamma_1\cosh {P_1}}{\sinh{2P_1}-k_1 \theta_1\gamma_1\sinh {P_1}};
\\
	&\zeta_{[3]}=\frac{2\bigl( k_2^2- k_1^2\bigr)\theta_1 \cosh P_2}{( k_2+ k_1)\cosh (P_2-P_1)-( k_2- k_1)\cosh (P_2+P_1)},\\
	&\delta_{[3]}=\frac{2\bigl( k_2^2- k_1^2\bigr)\mu_2 \sinh P_1}{( k_2+ k_1)\cosh (P_2-P_1)-( k_2- k_1)\cosh (P_2+P_1)},\\
&q_{[3]}=\partial_x\ln
\left [	 \frac{( k_2+ k_1)\cosh (P_2-P_1)-( k_2- k_1)\cosh (P_2+P_1)+\bigl( k_1^2+ k_2^2\bigr) \mu_2\theta_1}
	 {( k_2+ k_1)\cosh (P_2-P_1)+( k_2- k_1)\cosh (P_2+P_1)+2 k_1 k_2\mu_2\theta_1}
\right ];
\\
&\zeta_{[4]}=\frac{4k_1\theta_1\sinh P_1}{
2k_1 \bigl(x+12 k_1^2 t\bigr)-\sinh{2 P_1}
	},
\qquad
\delta_{[4]}=\frac{4k_1\mu_1\sinh P_1}{
2k_1 \bigl(x+12 k_1^2 t\bigr)-\sinh{2 P_1}},\\
&q_{[4]}=\partial_x\ln
\left[
	 \frac{
	 2k_1 \bigl(x+12 k_1^2 t\bigr)- \sinh{2 P_1}
	 +2k_1 \mu_1\theta_1 \sinh^2 {P_1}
	 }{
	2k_1 \bigl(x+12 k_1^2 t\bigr)+ \sinh{2 P_1}
	 -2k_1 \mu_1\theta_1 \cosh^2 {P_1}}
\right].
 \end{align*}
 One may easily get the solutions of Hu's smKdV equation by
$ y_{[j]}=q_{[j]}+\frac{1}{2}\delta_{[j]}\zeta_{[j]}$, $
 \xi_{[j]}=\delta_{[j],x}+q_{[j]} \delta_{[j]}
 $
 together with $\zeta_{[j]}$.

From Propositions \ref{prop13} and \ref{prop14} or from the reductions of $(\zeta_{[4]}, \delta_{[4]}, q_{[4]})$, we also obtain a~common solution for Kupershmidt's smKdV and Geng--Wu's smKdV equations, that is
 \begin{align*}
&\zeta_{[4]}=\frac{4k_1\theta_1\sinh P_1}{
2k_1 \bigl(x+12 k_1^2 t\bigr)-\sinh{2 P_1}
	},\qquad
	q_{[4]}=\partial_x\ln
\left [
	 \frac{2k_1 \bigl(x+12 k_1^2 t\bigr)- \sinh{2 P_1}}{
	2k_1 \bigl(x+12 k_1^2 t\bigr)+ \sinh{2 P_1}
	}
\right ].
 \end{align*}

Let fermionic variables disappear and $Q_1=P_1-\frac{\text{i}\pi}{4}$. We have
\begin{align*}
q_{[1]}=\frac{2 \text{i}k_1 }{\cosh 2Q_1},\qquad
q_{[4]}
=\frac{4\text{i} k_1 \big[\cosh 2Q_1-2 k_1\bigl(x+12 k_1^2 t\bigr)\sinh2 Q_1\big]}{4 k_1^2\bigl(x+12 k_1^2 t\bigr)^2+\cosh^2 2Q_1},
\end{align*}
which are two pure imaginary
 solutions of mKdV equation.

\subsection{Solutions of gsKdV equation and its reductions}
First choose the initial solution $w=0$, $\xi=0$, $\eta=0$. Then solving the differential equations~\eqref{y1x} yields
$
y_1 =k_1\tanh P_1$, $
\zeta_1 =\theta_1 {\rm sech} P_1$, $ \delta_1^+=\mu_1 {\rm sech} P_1$;
$
y_2^+=k_2\coth P_2$, $
\delta_2^+=\mu_2 {\rm csch} P_2$,
where $ P_j=k_j\bigl(x+4k_j^2 t+c_j\bigr)$, $\lambda_j=k_j^2$, and $c_j$ are arbitrary fermionic constants while~$\mu_j$ and $ \theta_j$ are arbitrary fermionic constants, for $ j=1, 2$.

Therefore, Proposition~\ref{prop4} leads to a one-soliton and two-soliton solutions of gsKdV equation. They are given by
\begin{gather*}
\widetilde{u}=-2k_1^2 {\rm sech^2} P_1,\qquad \widetilde{\eta}=-\theta_1 {\rm sech} P_1,\qquad \widetilde{\xi}=0;
\\
\widehat{\widetilde{u}}
=-2 \partial_x^2 \ln Q,\qquad
\widehat{\widetilde{\eta}}
=\frac{2\bigl(k_1^2-k_2^2\bigr)\theta_1\sinh P_2}{Q},\qquad
\widehat{\widetilde{\xi}}
=\frac{2\bigl(k_1^2-k_2^2\bigr)\mu_2\cosh P_1}{Q},
 \end{gather*}
where
$
 Q={(k_2+k_1)\cosh(P_2-P_1)+(k_2-k_1)\cosh(P_2+P_1)+2\mu_2\theta_1}$.
 From Proposition~\ref{prop5}, one has the solution of gsKdV equation
 \begin{align*}
 &{\overline{u}}=-2 \partial_x^2 \ln \big [2k_1^2\bigl(x+12 k_1^2 t\bigr)+ k_1 \sinh 2P_1 -2 \mu_1 \theta_1\sinh^2 P_1\big ],\\
& {\overline{\eta}}=-\frac{4k_1\theta_1\cosh P_1}{2k_1 \bigl(x+12 k_1^2 t\bigr)+\sinh 2P_1 },\qquad
 {\overline{\xi}}=-\frac{4k_1\mu_1\cosh P_1}{2k_1 \bigl(x+12 k_1^2 t\bigr)+\sinh 2P_1 }.
 \end{align*}

Similarly, by the reductions of $({\overline{\eta}}, {\overline{\xi}}, {\overline{u}})$, we also obtain the common solution for Kupershmidt's sKdV and Geng--Wu's sKdV equations, that is
 \begin{align*}
{\overline{\eta}}=-\frac{4k_1\theta_1\cosh P_1}{2k_1 \bigl(x+12 k_1^2 t\bigr)+\sinh 2P_1 },\qquad
{\overline{v}}=-2 \partial_x^2 \ln \big [ 2k_1 \bigl(x+12 k_1^2 t\bigr) +\sinh 2P_1\big ].
 \end{align*}

We also note that by combining the corresponding Miura transformation and Galilean transformation one may generate other solutions for gsKdV, Kupershmidt's and Geng--Wu's sKdV equations from the solutions of smKdV equation $(\zeta_{[j]}, \delta_{[j]}, q_{[j] }, j \ge 3)$.

\section{Conclusions and discussions}\label{sec7}

In this paper, first by constructing a Miura transformation, we obtained a gsmKdV equation from the gsKdV equation, and we showed that
not only the Kupershmidt's, but also Geng--Wu's sKdV together with the corresponding smKdV equations are reductions of them.
We also provide the first negative flows of the gsKdV and gsmKdV hierarchies.

From the binary BDT of the gsKdV equation, we obtained a BDT of the Geng--Wu's sKdV equation which is given in Proposition~\ref{prop6}.
We also constructed the DBTs for gsmKdV equation, and five transformations (one non-parameter transformations, two elementary BDTs, one compound BDT and one generalized BDT) were obtained which are presented in Proposi\-tions~\mbox{\ref{prop7}--\ref{prop14}}. We showed that the non-parameter transformation is related to the elementary DBT of gsKdV equation, while the elementary BTs are related to the generalized binary BT of gsKdV equation.
 The generalized DBT was used to obtain the DBTs for Geng--Wu's smKdV and Kupershmidt's smKdV equations, respectively, see Propositions~\ref{prop13} and~\ref{prop14}.
It was observed that the DBTs for Hu's smKdV equation can be obtained from Propositions~\ref{prop8}--\ref{prop12} directly.

Since all flows of the sKdV or smKdV hierarchy share the same spatial parts of spectral problem and Miura transformation, thus Darboux matrices and spatial parts of BTs can be used for any flow of the hierarchy.

There are some interesting problems to be explored. For example, it is interesting to construct the $N$-fold Darboux transformations and multi-soliton solutions for gsKdV and gsmKdV hierarchies,
derive the
temporal parts of BTs and solutions for the super sinh-Gordon equation~\eqref{sinh},
 and explore the dispersionless limits of gsKdV and gsmKdV equations.
Besides, the Hamiltonian structures and integrable discretizations for the gsmKdV equation are also worthy of study.
Progress in these directions may be published elsewhere.

\subsection*{Acknowledgements}

We thank the anonymous referees for their useful comments.
This work is supported by the National Natural Science Foundation of China (Grant Nos. 12175111, 11931107 and 12171474),
and NSFC-RFBR (Grant No. 12111530003) and the K.C.~Wong Magna Fund in Ningbo University.

\pdfbookmark[1]{References}{ref}
\LastPageEnding

\end{document}